\newtheorem{theorem}{Theorem}[section]
\newtheorem{lemma}[theorem]{Lemma}
\theoremstyle{definition}
\newtheorem{definition}[theorem]{Definition}
\newtheorem{example}[theorem]{Example}
\theoremstyle{remark}
\newtheorem{remark}[theorem]{Remark}
\numberwithin{equation}{section}
\newcommand{\alg}[1]{\mathfrak{#1}}
\newcommand{\mc}[1]{\mathcal{#1}}
\newcommand{\diag}[0]{\operatorname{diag}}
\newcommand{\Tr}[0]{\operatorname{Tr}}
\newcommand{\supp}[0]{\operatorname{supp}}
\begin{document}
\title{Subfactors and quantum information theory}

\author{Pieter Naaijkens}
\address{Department of Mathematics, University of California, Davis, One Shields Ave, Davis CA 95616, United States; JARA Institute for Quantum information, RWTH Aachen University, Otto-Blumenthal-Stra{\ss}e 20, 52074 Aachen, Germany}
\email{naaijkens@physik.rwth-aachen.de}
\thanks{This project has received funding from the European Union's Horizon 2020 research and innovation programme under the Marie Sklodowska-Curie grant agreement No 657004. I would also like to thank Leander Fiedler for interesting discussions.}


\date{April 3, 2018}

\begin{abstract}
	We consider quantum information tasks in an operator algebraic setting, where we consider normal states on von Neumann algebras. In particular, we consider subfactors $\alg{N} \subset \alg{M}$, that is, unital inclusions of von Neumann algebras with trivial center. One can ask the following question: given a normal state $\omega$ on $\alg{M}$, how much can one learn by only doing measurements from $\alg{N}$? We argue how the Jones index $[\alg{M}:\alg{N}]$ can be used to give a quantitative answer to this, showing how the rich theory of subfactors can be used in a quantum information context. As an example we discuss how the Jones index can be used in the context of wiretap channels.

Subfactors also occur naturally in physics. Here we discuss two examples: rational conformal field theories and Kitaev's toric code on the plane, a prototypical example of a topologically ordered model. There we can directly relate aspects of the general setting to physical properties such as the quantum dimension of the excitations. In the example of the toric code we also show how we can calculate the index via an approximation with finite dimensional systems. This explicit construction sheds more light on the connection between topological order and the Jones index.
\end{abstract}

\maketitle

\section{Introduction}
Quantum information can be defined as the study of information processing in a quantum setting, and is the quantum analog of classical (Shannon) information theory. In a typical scenario, Alice encodes classical information in a quantum state $\rho$, and sends it in some way to Bob. This transmission is rarely perfect, so Bob receives a state $\mathcal{E}(\rho)$, where $\mathcal{E}$ is a completely positive map (a \emph{quantum channel}) describing how the state is affected by the transmission. Bob's task is then to recover the classical information that Alice wanted to send, and a natural question is to ask what the rate of information is that can be send in this way (which depends on the channel $\mathcal{E}$). By now there are multiple textbooks which provide an introduction to the field, see for example~\cite{MR2986302,MR1796805,MR3088659}.

Usually only finite dimensional systems are considered (or infinitely many copies of a finite systems, in case one is interested in asymptotics). Comparatively little work has been done on quantum systems with \emph{infinitely} many degrees of freedom. Nevertheless, there are some examples, such as a study of the Holevo capacity~\cite{MR2200885} and a discussion of error correction in infinite Hilbert spaces~\cite{BenyKempfKribs2009} (this list is by no means complete). See also~\cite{MR1230389} for an overview of some technical properties of quantum channels in the operator algebraic picture.

One of the main goals of this paper is to show that the theory of operator algebras gives us new tools to study quantum information tasks in infinite systems.
It is our hope that the exposition is accessible to both quantum information theorists as well as to operator algebraists. We do this by relating more technical operator algebraic aspects to more familiar tasks in quantum information. Our running example will be subfactors, unital inclusions $\alg{N} \subset \alg{M}$ of von Neumann algebras with trivial center (see the next section for a brief introduction to von Neumann algebras). Subfactors are extensively studied in the operator algebra community, and have found applications in other fields as well (see for example~\cite{MR1838752,MR1027496,klindex}).

Rather than focussing on the states of a system, in our infinite dimensional setting we find it easier and more transparent to work in the Heisenberg picture, with a focus on the observables of the theory. The duality between the state (or Schr{\"o}dinger) picture and the Heisenberg picture is of course well-known in quantum mechanics. So for us a quantum channel will be a normal unital completely positive map $\mc{E} : \alg{M} \to \alg{N}$. A particular example is a \emph{conditional expectation}, in which case $\mathcal{E}$ is a ``projection'' from $\alg{M}$ onto a subalgebra $\alg{N}$. We will give a precise definition in Section~\ref{sec:opalg}. It can be thought of as a generalisation of the partial trace (which indeed is an example of a conditional expectation) that does not require a bipartite structure.

The physical interpretation is as follows. We can think of $\alg{M}$ modelling the observables in some quantum system. The subalgebra $\alg{N}$ then describes a subsystem, in the sense that only a subset of the observables can be measured. In particular, there may be states on $\alg{M}$ that can be distinguished with measurements in $\alg{M}$, but now when one is restricted to $\alg{N}$. We can then think of a conditional expectation $\mathcal{E}: \alg{M} \to \alg{N}$ as describing the quantum operation of restricting observables in $\alg{M}$ to the subsystem, or alternatively describing how much of their ability to distinguish states they lose. In the dual picture, $\mathcal{E}$ can be used to extend states on $\alg{N}$ to the bigger system $\alg{M}$ in a canonical way. In fact, $\omega \circ \mathcal{E}$ is the unique extension of a state $\omega$ on $\alg{N}$ that minimizes the relative entropy with respect to any invariant state $\varphi \circ \mathcal{E} = \varphi$ of $\alg{M}$.

The interesting thing about subfactors is that there typically is a natural or ``preferred'' conditional expectation. To each such conditional expectation one can define an ``index''. There is a unique conditional expectation $\mc{E}$ minimizing this index, and the corresponding index is written as $[\alg{M} : \alg{N} ]$~\cite{MR696688,MR829381,MR1027496}.\footnote{For Type II$_1$ factors one should choose the trace-invariant conditional expectation. In general this is not necessarily the one that minimizes the index.} This is called the \emph{Jones index}. We recall the definition in Section~\ref{sec:subfactors}. The upshot is that this index can be interpreted as an information quantity related to question discussed above.
This is done through expressing the index in terms of Holevo's $\chi$-quantity~\cite{MR0398304}, which gives a bound on the amount of classical information one can recover from an ensemble of states.
Although such normal conditional expectations are important in the theory of subfactors, applications to quantum information have received little attention. Here we try to fill this gap, and show how subfactors give rise to examples of quantum channels, which have a natural operational interpretation. 

We also discuss two examples of physical systems where a subfactor $\alg{N} \subset \alg{M}$ arises naturally: rational conformal field theory~\cite{MR1838752} and the toric code on the plane~\cite{klindex}. In these systems, the index is related to the superselection sectors (or ``charges'') of the model, providing a physical mechanism with which additional states can be distinguished. Some of these ideas have been discussed earlier in~\cite{phdleander,secretjkl}. Here we provide a more mathematical account of the main ingredients.

The definition of the index is rather technical and in many cases not amenable to a direct calculation. In particular for models such as the toric code it would be helpful to have an approximation procedure that gives us the index from (a sequence of) finite dimensional models. Here we present and work out a new way to do precisely that in the toric code. This also clarifies the role of the superselection sectors of the model in the quantum information task described above. It provides a way to relate a method introduced by Haah~\cite{MR3465431} to define the charges in a model, to the definition of superselection sectors that is appropriate for infinite systems.

The paper is organized as follows. We first recall the mathematical setting of quantum information in infinite systems. Then, in Section~\ref{sec:relent}, we recall the definition of relative entropy and discuss how it can be used to say something on how well two states can be distinguished. Section~\ref{sec:subfactors} gives a recap of the main definitions in subfactor theory, and shows how the Jones index is related to the relative entropy. This is applied in Section~\ref{sec:wiretap} to a wiretap scenario. Finally, in the last two sections we give examples of how such subfactors occur in physical systems, and show how in the toric code one can use approximations by finite dimensional systems to calculate the index. 

\section{Quantum channels and operator algebras}\label{sec:opalg}
Often in quantum information theory one considers finite dimensional systems. That is, systems given by (tensor products of) finite dimensional Hilbert spaces. Let $\mc{H}$ be such a finite dimensional Hilbert space. The physical operations on $\mc{H}$ are described by $\alg{B}(\mc{H})$, with self-adjoint elements corresponding to measurable observables, and states are given by density operators on $\mc{H}$ (that is, positive operators on $\mc{H}$ with unit trace). A \emph{quantum channel} $\Phi$ is a linear map that sends states on a system $\mc{H}_A$ to states on a possibly different system $\mc{H}_B$. Mathematically, it can be modeled as a trace-preserving completely positive linear map between the bounded linear operators of each of these systems. Positivity makes sure that positive operators are sent to positive operators, and hence because of the trace-preserving condition, density operators (or states) to density operators. Complete positivity implies that we can consider the composite system $\mc{H}_A \otimes \mathbb{C}^n$, and $\Phi \otimes \operatorname{Id}_n$ will still be a quantum channel.

Physically a quantum channel describes what happens with a state after some action. Examples range from time evolution over a finite amount of time, restricting a state to a smaller set of observables, or describing what happens if we sent a quantum state from one place to another subject to noise induced by interaction with the environment (a ``noisy channel''). Typical questions are then, for example, what the maximum rate of (classical or quantum) information is that can be sent through this channel, and if we can improve this by using error correction schemes. We refer to~\cite{MR3088659} for an in-depth treatment of many applications and results.

Here we will be interested in infinite quantum systems, which are no longer described by finite dimensional Hilbert spaces. In addition, the algebra generated by the observables of the system is no longer $\alg{B}(\mc{H})$, but rather some subalgebra $\alg{M} \subset \alg{B}(\mc{H})$, in particular those subalgebras that are von Neumann algebras.
We will recall some of the basic definitions and facts of the theory of such operator algebras, although some familiarity with the theory of von Neumann algebras will be helpful.
An introduction to von Neumann algebras and their role in physics can be found in~\cite{MR887100}.
See also~\cite{KeylSchlingemannWerner} for a brief discussion in the context of quantum information.

For concreteness, suppose that $\alg{M} \subset \alg{B}(\mathcal{H})$ is a von Neumann algebra.
That is, $\alg{M}$ is closed under the $*$-operation (often called the Hermitean conjugate in physics), and satisfies $\alg{M} = \alg{M}''$.
Here the prime denotes the commutant, $$\alg{M}' = \{ X \in \alg{B}(\mathcal{H}) : [A,X] = 0 \,\forall A \in \alg{M} \}.$$
Hence $\alg{M}$ is equal to its double commutant.
There are different topologies one can consider on a von Neumann algebra. 
It is most natural to consider \emph{normal} maps.
A linear functional $\varphi : \alg{M} \to \mathbb{C}$ is normal if there are sequences $\xi_n, \psi_n \in \mathcal{H}$ with $\sum_n \| \xi_n \|^2 < \infty$ (and similarly for $\psi_n$), such that $\varphi(X) = \sum \langle \xi_n, X \psi_n\rangle$ for all $X \in \alg{M}$.
A linear map $\mathcal{E} : \alg{M} \to \alg{N}$ is normal if and only if $\omega \circ \mathcal{E}$ is a normal state on $\alg{M}$ for any normal state $\omega$ of $\alg{N}$. Equivalently, $\sup_\lambda \mathcal{E}(X_\lambda) = \mathcal{E}(\sup X_\lambda)$ for any increasing and bounded (in norm) net $X_\lambda$ of positive elements in $\alg{M}$.
We will only consider normal maps.

A von Neumann algebra is called a \emph{factor} if $\alg{M} \cap \alg{M}' = \mathbb{C} I$, i.e.\ its center is trivial.
Every von Neumann algebra can be decomposed as a direct sum (or, more generally, a direct integral) of factors.
As was already known to von Neumann, factors can be classified into three types, denoted I, II and III (with further subdivisions possible).
Factors of Type I are the most familiar: those are precisely those von Neumann algebras that are isomorphic to $\alg{B}(\mathcal{H})$ for some Hilbert space $\mathcal{H}$.
They are completely classified by the dimension of the Hilbert space.
For example, a factor of Type I$_n$ is isomorphic to $M_n(\mathbb{C})$.
Every finite dimensional von Neumann algebra is a direct sum of Type $I$ factors.

We will be most concerned with infinite dimensional systems.
In particular, in the language of von Neumann algebras, we are primarily interested in systems where the observables are modeled by a von Neumann algebra that is not of Type I.
This already makes the situation very different from the finite dimensional setting.
Consider for example the case where the observables are given by a Type III factor $\alg{M}$. In such a factor, every projection $P$ is Murray-von Neumann equivalent to $I$, that is, there is an isometry $W \in \alg{M}$ such that $W^*W = I$ and $W W^* = P$. A consequence is that there a \emph{no} normal pure states.\footnote{Since $\alg{M}$ is in particular a $C^*$-algebra, it \emph{does} have pure states, but these states cannot be normal.} Let $\omega$ be a normal state implemented by a vector $\Omega$ (this can always be achieved in a suitable representation). If $\alg{M}$ is Type III, then also $\alg{M}'$ is of Type III, and by choosing a non-trivial projection $P$ in $\alg{M}'$ (which always exists), we can find isometries $V,W$ such that $V V^* + W W^* = I$. Then, with $A \in \alg{M}$, 
\[
	\omega(A) = \langle V^* \Omega, A V^* \Omega \rangle + \langle W^* \Omega, A W^* \Omega \rangle.
\]
By a suitable choice of $V$ and $W$ the states on the right hand side can be made distinct and hence $\omega$ is not a pure state.

This is not just a mathematical curiosity, but also of physical relevance. For example, under quite natural assumptions one can show that the local observables in relativistic quantum field theory are Type III factors. See~\cite{Yngvason2015} for a discussion of some of the consequences this has, for example on the entanglement properties of the vacuum. It is therefore indeed useful to consider the more general operator algebraic setting.

As mentioned an important role is played by quantum channels. In the operator algebraic picture it is more natural to consider the Heisenberg picture. That is, rather than viewing a channel as a map of states, we view it as a map between observable algebras: in the Heisenberg picture a \emph{quantum channel} is a normal unital completely positive map $\mc{E}: \alg{M} \to \alg{N}$ between von Neumann algebras. Of course these two notions of a channel are dual, and there is a unique ``adjoint'' map $\mc{E}_{*} : \alg{N}_* \to \alg{M}_*$, which sends a normal state $\omega$ on $\alg{N}$ to a normal state $\omega \circ \mc{E}$ on $\alg{N}$. Conversely, if $\mc{E}_*$ is completely positive, it uniquely defines a normal unital completely positive map $\mc{E} : \alg{M} \to \alg{N}$.

A fundamental result of Stinespring says that any completely positive map $\mathcal{E} : \alg{M} \to \alg{B}(\mathcal{H})$ is of the form $\mathcal{E}(X) = V^* \pi(X) V$, where $\pi : \alg{M} \to \alg{B}(\mc{K})$ is a representation of $\alg{M}$ on some Hilbert space $\mc{K}$, and bounded linear map $V : \mathcal{H} \to \mathcal{K}$~\cite{MR0069403}. If $\mathcal{E}$ is unital, $V$ is an isometry. If $\mathcal{E}$ is normal, then $\pi$ can also be chosen to be normal. If all Hilbert spaces are finite dimensional, a consequence of Stinespring's theorem is that $\mathcal{E}(X) = \sum_{i=1}^N V_i X V_i^*$, where the $V_i$ are called the \emph{Kraus operators}. For infinite systems this is more subtle, but see for example~\cite{MR2718321}.

A particular example of a quantum channel is a \emph{conditional expectation} $\mc{E} : \alg{M} \to \alg{N}$ from a von Neumann algebra $\alg{M}$ onto a subalgebra $\alg{N}$ (more generally, one could also consider inclusions of $C^*$-algebras).
Conditional expectations play a fundamental role in subfactor theory.
One can find different (but equivalent) definitions in the literature. We will use the following one:
\begin{definition}
	A \emph{conditional expectation} from a von Neumann algebra (or, more generally, a $C^*$-algebra) $\alg{M}$ onto $\alg{N}$ is a positive linear map $\mc{E}: \alg{M} \to \alg{N}$ such that $\mc{E}(ABC) = A \mc{E}(B) C$ and $\mc{E}(A) = A$ for all $A,C \in \alg{N}$ and $B \in \alg{M}$. 
\end{definition}
We will only consider unital conditional expectations between von Neumann algebras, and assume that they are normal. Note that if $\mathcal{E}$ is unital, the condition that $\mc{E}(A) = A$ for $A \in \alg{N}$ already follows from $\mc{E}(ABC) = A \mc{E}(B) C$.

Conditional expectations were introduced by Umegaki~\cite{MR0068751} and Dixmier~\cite{MR0059485} as a non-commutative generalization of conditional expectations in probability theory.
An important result by Tomiyama says that every projection of norm one, that is, a linear map $\mathcal{E}$ from a $C^*$-algebra $\alg{M}$ onto a $C^*$-subalgebra of $\alg{M}$ such that $\mathcal{E} \circ \mathcal{E} = \mathcal{E}$ and $\| \mathcal{E} \| = 1$, is in fact a conditional expectation~\cite{tomiyama1957}.
The converse is also true: every conditional expectation is a projection of norm one.
Finally, it can be shown that $\mc{E}$ is in fact \emph{completely} positive. In particular, $\mc{E}$ is a quantum channel. 
Proofs of all these statements can be found in~\cite[\S 9]{MR696172} or section II.6.10 of~\cite{MR2188261}. 

\begin{example}
	Consider a bipartite system $\mathcal{H} = \mathcal{H}_1 \otimes \mc{H}_2 = \mathbb{C}^m \otimes \mathbb{C}^n$. Let $\tau$ be the normalized trace on $\mathcal{H}_2$. Then we can define a map $\operatorname{Tr}_B$ by demanding $\operatorname{Tr}_2(A \otimes B) = A \tau(B)$ for all $A \in \alg{B}(\mc{H}_1)$ and $B \in \alg{B}(\mc{H}_2)$. By identifying $A$ with $A \otimes I$, this defines a linear map $\operatorname{Tr}_2 : \alg{M} \to \alg{N}$, with $\alg{M} = \alg{B}(\mc{H}_1 \otimes \mc{H}_2$) and $\alg{N} = \alg{B}(\mc{H}_1) \otimes \mc{H}_2$. It is easy to check that $\operatorname{Tr}_2$ is a conditional expectation. Note that up to a normalization factor, this is the partial trace of the second system.
\end{example}

The existence of a conditional expectation is not guaranteed and implies some conditions on $\alg{N}$. For example, if $\mc{E} : \alg{M} \to \alg{N}$ is a normal conditional expectation, and $\alg{M}$ is of Type I, then so is $\alg{N}$. Similarly, if $\alg{M}$ is semi-finite, so is $\alg{N}$ (cf.~\cite[\S 10]{MR696172}).

\section{Distinguishing states and relative entropy}\label{sec:relent}
Entropies and relative entropies play an essential role in quantum (and indeed, also classical) information theory. Many tasks, such as determining the amount of information one can send through a channel, ultimately boil down to calculation of certain entropies. Another such task that will be relevant for us is distinguishing states. Before we recall how entropies play a role here, we recall the definition of the relative entropy in the context of von Neumann algebras. This was first introduced by Araki~\cite{MR0425631,MR0454656}. A more modern treatment as well as an overview of subsequent results can be found in~\cite{MR1230389}.

Let $\alg{M}$ be a von Neumann algebra and suppose that $\omega,\varphi$ are two positive normal functionals on $\alg{M}$. Moreover, suppose that $\omega$ is implemented by a vector $\xi$, which can always be realized by switching to the Haagerup standard form if necessary~\cite{MR0407615}. The vector $\xi$ induces a positive functional $\omega_{\xi}'$ on the commutant $\alg{M}'$, and it is possible to define the \emph{spatial derivative} $\Delta(\varphi/\omega'_\xi)$~\cite{MR561983}, which generally is an unbounded operator. The \emph{relative entropy} is then defined as\footnote{The reader should be warned that in the literature sometimes the order of the arguments is reversed in the definition.}
\begin{equation}
	S(\omega,\varphi) := \begin{cases}
		- \langle \xi, \log \Delta(\varphi/\omega'_\xi) \xi\rangle & \textrm{if } \supp \omega \leq \supp \varphi \\
		+ \infty & \textrm{otherwise}
			\end{cases}.
\end{equation}
Here $\supp \omega$ is the support projection of $\omega$, that is, the smallest projection $P$ such that $\omega(P) = \omega(I)$. Note that we do not restrict to states, but consider \emph{all} positive normal linear functionals.

For the remainder of this paper the precise technical details of this definition are not important. We will however frequently use that for finite dimensional systems, the definition reduces to the following equation, which will be more familiar to the quantum information community:
\begin{equation}
S(\rho,\sigma) = \Tr( \rho \log(\rho) - \rho \log(\sigma))
\end{equation}
if $\supp(\rho) \leq \supp(\sigma)$, and $+ \infty$ otherwise. Here we identify the positive linear functionals $\rho$ and $\sigma$ with the corresponding positive matrices, i.e. $\rho(A) = \Tr(\rho A)$. This notion of relative entropy was first studied by Umegaki~\cite{MR0142006}.

A fundamental result is that for normal states $S(\omega,\varphi) \geq 0$, with equality if and only if $\omega = \varphi$. So even though $S(\omega,\varphi)$ is not a metric, it tells us something on how distinct the two states are. It can sometimes tell us more than just being distinct. Suppose that we have a normal state $\varphi$. Then we can consider a finite probability distribution $\{ p_x \}$ and a corresponding set of normal states $\{ \varphi_x \}$ such that $\varphi = \sum_x p_x \varphi_x$. Physically this corresponds to a procedure where we prepare a state by choosing one of the states $\varphi_x$ with probability $p_x$. Suppose now that after preparing the state we give it to Bob, who is allowed to know from which ensemble of states the state is selected, but does not know the probability distribution $\{p_x\}$. Bob's task is to recover the probability distribution. If the states $\varphi_x$ have orthogonal support, Bob can recover the distribution arbitrarily well, given enough copies of the state $\varphi$.

This is no longer true if the states do not have orthogonal support. Nevertheless, it is still possible to give a bound on the amount of information that can be obtained. To this end, define the \emph{Holevo $\chi$-quantity} for such a decomposition as~\cite{MR0398304}
\begin{equation}
	\label{eq:chi}
	\chi(\{p_x\}, \{\varphi_x\}) := \sum_{x} p_x S(\varphi_x, \varphi).
\end{equation}
It can also be written in terms of entropies (rather than \emph{relative} entropies), but for infinite systems the relative entropy formulation is preferred (since entropies are often infinite in that case). The $\chi$-quantity gives a bound on the amount of classical information that can be recovered from an ensemble, and appears in many different places in quantum information. For example, a fundamental result is the \emph{coding theorem}, which says that $\max_{\{p_x, \varphi_x\}} \chi(\{p_x\}, \{\varphi_x\})$ is equal to the \emph{classical} capacity of the classical-quantum channel $x \mapsto \varphi_x$~\cite{MR1486663,PhysRevA.56.131}. That is, the rate of classical information that can be transmitted without error in the limit of asymptotically many uses of the channel.

For infinite systems, however, the quantity~\eqref{eq:chi} is not that useful in itself: for faithful normal states $\varphi$ of von Neumann algebras of Type II or Type III, we can make this quantity as large as we want by choosing an appropriate decomposition of $\varphi$ (this follows from~\cite[Lemma 6.10]{MR1230389}). Hence while for a given decomposition this is still a meaningful quantity, one has to take care when optimizing over all possible decompositions.

Here we will consider a slightly different scenario. Suppose we have a unital inclusion of von Neumann algebras $\alg{N} \subset \alg{M}$. We will think of $\alg{M}$ as describing ``full'' set of observables, while $\alg{N}$ is a more limited set of observables, for example describing the situation where a third party has only access to a limited part of the whole quantum system. Then if $\omega$ is a normal state on $\alg{M}$, we can again consider a decomposition $\omega = \sum p_x \omega_x$ as before. Since $\alg{M}$ contains more operations than $\alg{N}$, one would expect that in general an observer who has access to all operations in $\alg{M}$ would do a better job of recovering $\{p_x\}$ than an observer who only has access to operations in $\alg{N}$. That is, an observer who has to work with the \emph{restricted} states $\omega_x | \alg{N}$. From the discussion above it is reasonable to stipulate that the advantage that $\alg{M}$ has over $\alg{N}$ can be quantified by
\begin{equation}
	\label{eq:chidiff}
	S_{\alg{M}|\alg{N}}(\{p_x\}, \{\varphi_x\}) := \chi( \{p_x \}, \{\varphi_x\}) - \chi( \{p_x\}, \{ \varphi_x | \mathfrak{N} \}).
\end{equation}
This quantity is also called the \emph{entropic disturbance} of the quantum channel that restricts normal states on $\alg{M}$ to $\alg{N}$ and will play a central role in the remainder of this article. Some properties and applications of this quantity in infinite dimensional systems are discussed in~\cite{2016arXiv160802203S}. For other applications, see Section~\ref{sec:wiretap} below.

\section{Subfactors, conditional expectations, and Jones index}\label{sec:subfactors}
We now restrict to a particularly interesting class of examples of inclusions $\alg{N} \subset \alg{M}$: the subfactors. There are many aspects to such subfactors, but for our purposes one of the most important features is that one can define an \emph{index} $[\alg{M} : \alg{N}]$ which, heuristically speaking, gives a measure of how much bigger $\alg{M}$ is compared to $\alg{N}$. It can be seen as the generalization of the index of a subgroup in a group. Subfactors and the index were first studied by Jones in the Type II$_1$ case~\cite{MR696688}. Many of the results were later extended to general factors by Kosaki~\cite{MR829381} and Longo~\cite{MR1027496,MR1059320}, among others.

We first recall the definition of a subfactor.
\begin{definition}
A \emph{subfactor} is a unital inclusion of factorial von Neumann algebras $\alg{N} \subset \alg{M}$. It is called \emph{irreducible} if $\alg{N}' \cap \alg{M} = \mathbb{C} I$. 
\end{definition}
The theory of subfactors is very rich and examples come up in many different contexts, some of which we will discuss in Section~\ref{sec:examples}.

\begin{example}
	Let $\alg{M} = \alg{B}(\mc{H}_1 \otimes \mc{H}_2)$ and $\alg{N} = \alg{B}(\mc{H}_1) \otimes I$. Then $\alg{N} \subset \alg{M}$ is a subfactor. It is not irreducible (unless $\dim \mc{H}_2 = 1$), since $\alg{N}' \cap \alg{M} = (\alg{B}(\mc{H}_1) \otimes I)' = I \otimes \alg{B}(\mc{H}_2)$.
\end{example}

A key role in the theory is played by the conditional expectations introduced in Section~\ref{sec:opalg}.
We will mainly consider irreducible subfactors. In that case, if there is a faithful conditional expectation $\mathcal{E} : \alg{M} \to \alg{N}$, it is unique~\cite[Sect. 5]{MR1027496}.
We are now in a position to define the \emph{index} of a subfactor $\alg{N} \subset \alg{M}$. This is the quantity that will allow us to connect an inclusion of von Neumann algebras to a quantum information quantity.
\begin{definition}\label{def:index}
Let $\alg{N} \subset \alg{M}$ be an irreducible subfactor with both algebras not of Type I. Let $\mc{E}: \alg{M} \to \alg{N}$ be the unique faithful conditional expectation (if it exists). Then we define the Jones (or Jones-Kosaki-Longo) index by
\[
	[\alg{M}:\alg{N}] := (\sup \{ \lambda > 0 : \mc{E}(x) \geq \lambda x \textrm{ for all } x \in \alg{M}_+ \})^{-1}.
\]
If such a conditional expectation does not exist, we set $[\alg{M}:\alg{N}] = \infty$.
\end{definition}
The index measures how much bigger $\alg{M}$ is compared to $\alg{N}$. One can show that $[\alg{M}: \alg{N}] \geq 1$, with equality if and only if $\alg{M} = \alg{N}$. It also gives a bound on the dimension of $\alg{M}$ seen as a module over $\alg{N}$ (see for example~\cite[Sect. 3.4]{MR1662525}).

This is not the original definition of Jones, but coincides with his if $\mc{E}$ is the trace-preserving conditional expectation of a Type II$_1$ subfactor. The equivalence to the definition given here is due to Pimsner and Popa in the Type II$_1$ case~\cite{MR860811}. For general subactors, one can define $\operatorname{Ind}(\mc{E})$ for a conditional expectation $\mc{E} : \alg{M} \to \alg{N}$, either through a Pimsner-Popa type inequality or using modular theory (the definition there is quite technical, so we will not repeat it here). It can be shown that there is a unique conditional expectation $\mc{E}_0$ minimizing the index, and one can define $[\alg{M}:\alg{N}]$ using this expectation. It should be noted, however, that for Type II$_1$ factors this minimal conditional expectation need not coincide with the trace-preserving conditional expectation that Jones uses for his index. Since for irreducible subfactors the faithful conditional expectation is unique (if it exists), this is not an issue. We refer to~\cite{MR1662525} for details on the index theory.

A simple argument relates the index to the entropic quantity $S_{\alg{M}|\alg{N}}$ (c.f.~\cite[Cor. 4.1]{MR860811} or~\cite[Prop. 10.2.2]{MR2251116}\footnote{The author would like to thank Ben Hayes for pointing out this reference.}).
\begin{lemma}
	Let $\alg{N} \subset \alg{M}$ be a finite index irreducible subfactor and $\mathcal{E}$ the corresponding conditional expectation. Then for any normal state $\varphi = \varphi \circ \mathcal{E}$ of $\alg{M}$ we have 
	\[
		S_{\alg{M}|\alg{N}}( \{ p_x \}, \{ \varphi_x \}) \leq \log \left[ \alg{M} : \alg{N} \right],
	\]
	where $\varphi = \sum_x p_x \varphi_x$.
\end{lemma}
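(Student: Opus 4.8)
The plan is to unpack the definition of $S_{\alg{M}|\alg{N}}$, express each $\chi$-term via relative entropies, and bound the difference $S(\varphi_x,\varphi) - S(\varphi_x|\alg{N}, \varphi|\alg{N})$ uniformly in $x$ by $\log[\alg{M}:\alg{N}]$, after which taking the convex combination $\sum_x p_x(\cdots)$ gives the result immediately. So the whole statement reduces to the pointwise monotonicity-type estimate
\[
	S(\psi,\varphi) - S(\psi|\alg{N}, \varphi|\alg{N}) \leq \log[\alg{M}:\alg{N}]
\]
for an arbitrary normal state $\psi$ on $\alg{M}$, where $\varphi = \varphi\circ\mathcal{E}$ is the given $\mathcal{E}$-invariant state. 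Note the left-hand side is nonnegative by monotonicity of relative entropy under the channel $X \mapsto X$ restricted to $\alg{N}$ (equivalently under $\mathcal{E}_*$), so this is genuinely a two-sided control.

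The key input is the Pimsner–Popa inequality packaged into Definition~\ref{def:index}: writing $\lambda^{-1} = [\alg{M}:\alg{N}]$, we have $\mathcal{E}(x) \geq \lambda x$ for all $x \in \alg{M}_+$. I would use this to compare the two spatial derivatives appearing in the definition of relative entropy. Concretely, since $\varphi = \varphi\circ\mathcal{E}$, the state $\varphi$ on $\alg{M}$ is determined by its restriction to $\alg{N}$ together with $\mathcal{E}$, and one can relate $\Delta(\varphi/\psi')$ on $\alg{M}$ to the corresponding spatial derivative for $\varphi|\alg{N}$ via the operator-inequality $\mathcal{E}(x) \geq \lambda x$. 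The cleanest route is via the finite-dimensional formula $S(\rho,\sigma) = \Tr(\rho\log\rho - \rho\log\sigma)$: in the finite-dimensional (or semifinite, trace-available) case, $\mathcal{E}$ has an adjoint $\mathcal{E}_*$ with respect to the trace, $\mathcal{E}_*$ is a channel, and the Pimsner–Popa bound translates into $\mathcal{E}_*(\sigma) \geq \lambda\,\sigma$ at the level of densities when $\sigma$ is an $\mathcal{E}$-invariant density; combined with the data-processing inequality $S(\mathcal{E}_*\rho, \mathcal{E}_*\sigma) \leq S(\rho,\sigma)$ wait—I want the inequality in the other direction. The right tool is instead: $S(\rho,\sigma) - S(\mathcal{E}_*\rho,\mathcal{E}_*\sigma) \leq \log\lambda^{-1}$ follows because the "loss" in data processing is controlled by how far $\mathcal{E}$ is from an isometry, and $\mathcal{E}(x)\geq\lambda x$ is precisely the quantitative statement that it is within a factor $\lambda$. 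I would make this precise by writing $\log\Delta(\varphi/\psi') - \log\Delta(\varphi|\alg{N}\,/\,\psi'|\alg{N})$ as a logarithm of a ratio bounded above by $\lambda^{-1}I$ using the cocycle/Connes-derivative machinery, or, following the cited references~\cite{MR860811,MR2251116}, invoke the known identity $\operatorname{Ind}(\mathcal{E})\cdot\mathcal{E}$ dominates the inclusion map and pull the estimate through Araki's relative entropy. For the general (Type II/III) case, I would reduce to the spatial-derivative definition and use that $\mathcal{E}(x)\geq\lambda x$ implies, in Haagerup standard form, $\Delta(\varphi/\psi')_{\alg{M}} \geq \lambda\,\Delta(\varphi|\alg{N}/\psi'|\alg{N})_{\alg{N}}$ as an inequality of positive self-adjoint operators (restricted appropriately), whence taking logarithms and expectation values against $\xi$ gives the bound with the $\log\lambda^{-1} = \log[\alg{M}:\alg{N}]$ term dropping out as an additive constant.

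The main obstacle I expect is the operator-inequality comparison of the two spatial derivatives across the inclusion $\alg{N}\subset\alg{M}$: spatial derivatives live on different algebras with different commutants, the relevant vector $\xi$ implementing $\psi$ need not simultaneously implement $\psi|\alg{N}$ in the same GNS space, and $\log$ is operator-monotone but unbounded, so one must be careful about domains and about the support condition $\supp\psi \leq \supp\varphi$. The cleanest way around this is probably to first prove the finite-dimensional case by the trace-adjoint argument above (where everything is elementary matrix manipulation with $\mathcal{E}_*$ and the Pimsner–Popa constant), then handle the general case either by the semifinite reduction or simply by citing the relative-entropy version of the Pimsner–Popa inequality as it appears in~\cite[Cor.~4.1]{MR860811} and~\cite[Prop.~10.2.2]{MR2251116}, which is exactly this lemma stated for the $\mathcal{E}$-invariant state. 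Once the pointwise bound is in hand, averaging with weights $p_x$ and recalling $\chi(\{p_x\},\{\varphi_x\}) - \chi(\{p_x\},\{\varphi_x|\alg{N}\}) = \sum_x p_x\bigl(S(\varphi_x,\varphi) - S(\varphi_x|\alg{N},\varphi|\alg{N})\bigr)$ finishes the proof.
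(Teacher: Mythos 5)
Your overall shape is right --- reduce to a per-term bound and feed the Pimsner--Popa inequality $\mathcal{E}(x)\geq\lambda x$ into the relative entropy --- but there is a genuine gap at the central step, and the route you sketch to fill it would not work as stated. The missing ingredient is the exact identity, valid because $\varphi$ is $\mathcal{E}$-invariant (so $\varphi=(\varphi|\alg{N})\circ\mathcal{E}$),
\[
	S(\omega,\varphi)=S(\omega|\alg{N},\varphi|\alg{N})+S(\omega,\omega\circ\mathcal{E})
\]
for every normal state $\omega$ on $\alg{M}$ (Theorem~5.15 of~\cite{MR1230389}, the chain rule for faithful conditional expectations). This collapses the difference you are trying to bound to $S(\varphi_x,\varphi_x\circ\mathcal{E})$, which no longer involves the ensemble average $\varphi$ at all; then $\varphi_x\circ\mathcal{E}\geq[\alg{M}:\alg{N}]^{-1}\varphi_x$ together with antimonotonicity of $S$ in its second argument and the scaling law $S(\omega,\lambda\omega)=-\log\lambda$ gives $S(\varphi_x,\varphi_x\circ\mathcal{E})\leq\log[\alg{M}:\alg{N}]$ in one line, and averaging over $p_x$ finishes the proof.

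By contrast, your plan to compare the spatial derivatives over $\alg{M}$ and over $\alg{N}$ directly is not viable in the form written: the two operators are built from different commutants, need not commute with each other, and ``the logarithm of their ratio'' is not a well-defined object you can bound by $\lambda^{-1}I$; operator monotonicity of $\log$ would at best give a one-sided comparison of the logarithms, not of the expectation values against $\xi$ that you need, and you flag the sign confusion yourself mid-argument. The assertion that the data-processing loss is controlled by how far $\mathcal{E}$ is from the identity is precisely the content of the chain-rule identity above --- without it you have no mechanism for pulling the Pimsner--Popa constant through Araki's relative entropy, and your fallback of citing \cite[Cor.~4.1]{MR860811} amounts to citing the statement you are asked to prove. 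Once the spatial-derivative comparison is replaced by the chain-rule identity, the rest of your argument (nonnegativity of the difference by monotonicity under restriction, the final averaging step) is correct and matches the paper.
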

\begin{proof}
	Since $\mathcal{E}$ is a faithful conditional expectation, by Theorem 5.15 of~\cite{MR1230389} $S(\omega, \varphi \circ \mathcal{E}) = S(\omega | \alg{N}, \varphi | \alg{N}) + S(\omega, \omega \circ \mathcal{E})$ for any normal state $\omega$ on $\alg{M}$.
	Using this we can rewrite equation~\eqref{eq:chidiff} as
\[
	S_{\alg{M}|\alg{N}}( \{ p_x \}, \{ \varphi_x \}) = \sum_x p_x S\left(\varphi_x, \varphi_x \circ \mc{E}\right).
\]
From Definition~\ref{def:index} it follows that $\omega \circ \mathcal{E} \geq [\alg{M}:\alg{N}]^{-1} \omega$ for any normal state $\omega$. This implies that
\[
	\sum_x p_x S(\varphi_x, \varphi_x \circ \mathcal{E}) \leq \sum_x p_x S(\varphi_x, [\alg{M}:\alg{N}]^{-1} \varphi_x) = \sum p_x \log [\alg{M}: \alg{N}],
\]
where in the first step we used Corollary~5.12 of~\cite{MR1230389}, and in the second step the scaling properties of the relative entropy.
\end{proof}

The following key result, due to Pimsner and Popa for Type~II$_1$ factors and Hiai for the general case show that the bound can actually be attained. It is in fact sufficient to consider only \emph{faithful} states $\varphi \circ \mathcal{E} = \varphi$, but since for the applications we are interested in this restriction is not very natural, we consider the general case.
\begin{theorem}[Pimsner-Popa~\cite{MR860811}, Hiai~\cite{MR1150623,MR1096438}]\label{thm:pph}
	Let $\alg{N} \subset \alg{M}$ be an irreducible subfactor with finite index. Then we have
	\[
		\log [\alg{M}:\alg{N}] = \sup_{\varphi : \varphi \circ \mc{E} = \varphi} \sup_{ \{p_x\}, \{\varphi_x\}} S_{\alg{M}|\alg{N}}(\{p_x\}, \{\varphi_x\}).
	\]
	Here $\mc{E} : \alg{M} \to \alg{N}$ is the conditional expectation associated to the subfactor. The first supremum is over all normal states on $\alg{M}$ that leave $\mc{E}$ invariant, and the second is over all finite decompositions of such a $\varphi$.
\end{theorem}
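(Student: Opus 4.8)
The preceding lemma already gives one of the two inequalities: every $\mc{E}$-invariant normal state $\varphi$ on $\alg{M}$ and every finite decomposition $\varphi = \sum_x p_x\varphi_x$ satisfies $S_{\alg{M}|\alg{N}}(\{p_x\},\{\varphi_x\}) \le \log[\alg{M}:\alg{N}]$, so the double supremum on the right is at most $\log[\alg{M}:\alg{N}]$. The whole work is therefore the reverse inequality: given $\varepsilon > 0$, I want to exhibit an invariant normal state $\varphi$ and a finite decomposition of it with $S_{\alg{M}|\alg{N}}(\{p_x\},\{\varphi_x\}) \ge \log[\alg{M}:\alg{N}] - \varepsilon$.

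First I would isolate the reduction already implicit in the lemma's proof. For an invariant $\varphi = \sum_x p_x\varphi_x$ one has $S_{\alg{M}|\alg{N}}(\{p_x\},\{\varphi_x\}) = \sum_x p_x\, S(\varphi_x,\varphi_x\circ\mc{E})$, and each summand is controlled by the chain $S(\varphi_x,\varphi_x\circ\mc{E}) \le S(\varphi_x,[\alg{M}:\alg{N}]^{-1}\varphi_x) = \log[\alg{M}:\alg{N}]$ --- the first step being monotonicity of relative entropy in its second argument (using $\varphi_x\circ\mc{E} \ge [\alg{M}:\alg{N}]^{-1}\varphi_x$, from Definition~\ref{def:index}), the second the scaling relation $S(\rho,c\rho) = -\log c$ for a state. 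Since the first step is the only one that can be lossy, the supremum is approached exactly when almost all the weight $p_x$ sits on states $\omega$ with $S(\omega,\omega\circ\mc{E})$ near $\log[\alg{M}:\alg{N}]$, equivalently on states for which $\omega\circ\mc{E}$ nearly coincides with $[\alg{M}:\alg{N}]^{-1}\omega$ on the support of $\omega$ --- states whose conditional expectation only just exceeds the Pimsner--Popa bound $\mc{E}(x)\ge[\alg{M}:\alg{N}]^{-1}x$. So the genuine problem is to manufacture an ensemble of such near-extremal states summing, with the right weights, to an $\mc{E}$-invariant state, and then to truncate to a finite subsum losing at most $\varepsilon$.

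Constructing those states is the substantive part, and here I would lean on the cited work rather than reprove it. In the Type~II$_1$ case the trace makes everything concrete: in the Jones basic construction $\alg{M}_1 = \langle\alg{M},e_{\alg{N}}\rangle$ the dual conditional expectation satisfies $\mc{E}_1(e_{\alg{N}}) = [\alg{M}:\alg{N}]^{-1}$, and a Pimsner--Popa basis promotes this scalar identity into a genuine finite partition of unity inside $\alg{M}$ whose blocks saturate the bound; running the trace through the resulting computation returns exactly $\log[\alg{M}:\alg{N}]$. This is the argument behind Corollary~4.1 of~\cite{MR860811}, equivalently the identity $H(\alg{M}|\alg{N}) = \log[\alg{M}:\alg{N}]$ for the relative entropy of Connes and St{\o}rmer. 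For a general irreducible finite-index subfactor there is no trace to transport, so the basic-construction bookkeeping must be replaced by modular theory and operator-valued weights; this is exactly what Hiai carries out in~\cite{MR1150623,MR1096438}, and in that generality one only obtains a supremum (approached along a net of finite decompositions), which is why the statement is phrased with a supremum rather than a maximum.

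The main obstacle, as this makes clear, is the non-semifinite case: without a trace, even writing down the right ``approximate Jones projection'' and estimating the relative entropies it produces requires the full modular machinery, and one must in addition pass from the natural continuous decompositions to honestly finite ones to match the statement. For an article of this scope I would present the reduction above in detail and invoke~\cite{MR860811,MR1150623,MR1096438} --- see also the textbook account in~\cite{MR1662525} --- for the existence of the near-extremal ensemble.
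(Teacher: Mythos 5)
Your proposal matches the paper's treatment: the upper bound is exactly the preceding lemma, and the paper gives no proof of the attainability at all, simply citing Pimsner--Popa and Hiai for it, which is precisely where you defer as well. Your sketch of the reduction to $\sum_x p_x S(\varphi_x,\varphi_x\circ\mc{E})$ and of where the basic construction/modular theory enters is accurate and, if anything, more informative than the paper's bare citation.
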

The condition that the subfactor is irreducible is not essential, but simplifies the statement (and is enough for our purposes). The reason is that in this case there is a unique conditional expectation $\mc{E} : \alg{M} \to \alg{N}$. In the general case, one has to consider the \emph{minimal} conditional expectation.

The use of entropies in subfactor theory was motivated by work of Connes and St{\o}rmer, who were interested in a non-commutative generalization of entropies as they are used in classical dynamical systems (see~\cite{MR2251116} for a review). Their relative entropy of two operator algebras provides the connection between the algebraic definition of the index in Definition~\ref{def:index} to the formula in terms of relative entropies of states given in the theorem above. Our motivation is different: we are mainly interested in the application of these results to quantum information theory, in particular to channel capacities, as discussed in the next section. This concludes our brief discussion of subfactors.

\section{Wiretap channels and private classical capacity}\label{sec:wiretap}
There are various quantum information tasks that can be described in the subfactor setting. For example, an application to secret sharing was discussed in~\cite{secretjkl}. Here we outline how we can interpret the subfactor in the context of wiretapping channels. This provides a new application of entropies in subfactors, and the first in quantum information. Classical wiretap channels were introduced by Wyner~\cite{MR0408979}. Their quantum counterparts were first studied by Schumacher and Westmoreland~\cite{PhysRevLett.80.5695}.

Let us first consider, following~\cite[Sect. 10.4]{MR2986302}, the typical setup for a quantum wiretap channel. Consider finite dimensional Hilbert spaces $\mc{H}_A$ (controlled by Alice), $\mc{H_B}$ (controlled by Bob) and $\mc{H}_E$ (controlled by an eavesdropper Eve). Alice uses an isometry $V: \mc{H}_A \to \mc{H}_B \otimes \mc{H}_E$ to encode states $\rho_A$ in the target Hilbert space consisting of Bob's and Eve's parts. Since Bob and Eve can only control their respective parts, they will have access to the states $\rho_B := \Tr_E(V \rho_A V^*)$ and $\rho_E := \Tr_B(V \rho_A V^*)$, respectively. Note that we can describe this by two quantum channels $\Phi_B$ and $\Phi_E$.

Bob can obtain some information on the initial state $\rho_A$ that Alice sent through the channel by doing measurements on his part of the system. But since Eve \emph{also} has access to part of the system, she can also gain some knowledge. A natural question is to ask how much information Alice can send to Bob in such a way that Eve cannot learn anything about the message Alice wants to send to Bob by measurements on her part of the system. From the discussion in Section~\ref{sec:relent} this can be bounded by
\[
	\chi(\{p_x\}, \{\Phi_B(\rho_A^x)\}) - \chi(\{p_x\}, \{\Phi_E(\rho_A^x)\}),
\]
where $\rho_A = \sum_x p_x \rho_A^x$.
This quantity is sometimes called the \emph{quantum privacy}. If each $\rho_A^x$ is a \emph{pure} state, it can be shown to be equal to a quantity called the \emph{coherent information}~\cite{PhysRevLett.80.5695}.

We now come back to our subfactor setting, with $\alg{N} \subset \alg{M}$ a subfactor, and want to relate it to the wiretapping scenario above. In the Heisenberg picture we can model Bob and Eve with two observable algebras that mutually commute. However, in the general von Neumann algebra setting, there need not be a corresponding tensor product decomposition as above, so it is not clear how to define the channels $\Phi_B$ and $\Phi_E$ as before. Nevertheless, we can consider a scenario that is similar: we suppose that Eve has control over all observables in $\alg{N}$. Bob, on the other hand, is more powerful, and can access $\alg{M}$. Again Alice wants to encode information for Bob that Eve cannot recover. We consider the simplest scenario possible, where Alice encodes the information directly in normal states $\omega$ on $\alg{M}$ and does not send them through a channel first. Then a bound on the amount that Bob can hide from Eve is given by equation~\eqref{eq:chidiff}. Note that the restriction to $\alg{N}$ is a quantum channel: it is the adjoint of the channel (in the Heisenberg picture) $\iota : \alg{N} \to \alg{M}$, the inclusion homomorphism. This channel plays the role of $\Phi_E$. But using subfactor theory we can say more: if Alice is restricted to \emph{ensembles} $\omega = \omega \circ \mc{E}$, the optimum value is given by the (log of the) index $[\alg{M} : \alg{N}]$ by Theorem~\ref{thm:pph}! Again, it is \emph{not} necessary that the individual states in the ensemble are also invariant with respect to $\mc{E}$. 

Let us also briefly comment on the condition that the ensembles are invariant with respect to $\mc{E}$. Let $\omega \circ \mc{E} = \omega$ be a normal state. If $\varphi_\alg{N}$ is a normal state on $\alg{N}$, we can extend it to a normal state on $\alg{M}$ by $\varphi_\alg{N} \circ \mc{E}$. But there can be (and in general, are) many extensions of $\varphi_\alg{N}$ to $\alg{M}$. Then $\varphi_\alg{N} \circ \mc{E}$ is the \emph{unique} extension $\varphi$ that minimizes $S(\omega, \varphi)$ (under the condition that $S(\omega,\varphi) < \infty$ for at least one extension), see the discussion after~\cite[Lemma 5.18]{MR1230389}. So be restricting to such states, we isolate the contribution that is due to Bob being able to distinguish individual states in the ensemble, that cannot be distinguished from the ensemble itself.

What if we consider $n$ copies of the system, can we do better? Alice can send states $\rho^{(n)} = \sum_x p_x^{(n)} \rho_x^{(n)}$, with $\rho_x^{(n)}$ states on $\mc{H}_A^{\otimes n}$ for some integer $n$. The states that Bob receive are described by the channels $\Phi_B \otimes \cdots \otimes \Phi_B$, and similarly for Eve. Note that the states $\rho^{(n)}$ need not be product states (or even separable states), and neither do the allowed measurements have to be of product form. The \emph{private capacity} is then defined as the (average) amount of information that Alice can send to Bob using $n$ channels, in the limit $n \to \infty$, such that the amount of information Eve can recover is negligible. It turns out that this is in fact \emph{equal} to
\[
	\lim_{n \to \infty} \frac{1}{n} \max_{\{p_x^{(n)}, \rho_x^{(n)} \}} \left[ \chi(\{p_x^{(n)}\}, \{\Phi_B^{\otimes n}(\rho_x^{(n)})\})) - \chi(\{p_x^{(n)}\}, \{\Phi_E^{\otimes n}(\rho_x^{(n)})\}) \right],
\]
a result proven by Devetak~\cite{MR2234571} and Cai, Winter and Young~\cite{MR2105852}. Note that this can also be interpreted as a coding theorem.

In the operator algebraic setting we can consider the von Neumann algebra $\alg{M}_1 \overline{\otimes} \alg{M}_2$ generated by the algebraic tensor product $\alg{M}_1 \odot \alg{M}_2$. If $\Phi_i : \alg{M}_i \to \alg {N}_i$, $i=1,2$ are normal maps, there is a normal map $\Phi_1 \otimes \Phi_2 : \alg{M}_1 \overline{\otimes} \alg{M_2} \to \alg{N}_1 \overline{\otimes} \alg{N}_2$ defined in the natural way. Hence we can ask the same question as before, but not with $\alg{M}^{\otimes n}$ and $\alg{N}^{\otimes n}$, respectively: 

\begin{theorem}
Let $\alg{N} \subset \alg{M}$ be an irreducible subfactor. Then for each $n \in \mathbb{N}$ we have
\[
	\sup_{\varphi : \varphi \circ \mc{E}^{\otimes n} = \varphi} \sup_{ \{p_x^{(n)}, \varphi_{x}^{(n)}\}}   
	\left[\chi( \{p_x^{(n)} \}, \{\varphi_x^{(n)}\}) - \chi( \{p_x^{(n)}\}, \{ \varphi_x^{(n)} | \mathfrak{N}^{\otimes n} \}) \right]
	= n \log [\alg{M}:\alg{N}].
\]
A similar statement is true for reducible subfactors if one takes $\mc{E}$ to be the conditional expectation minimizing the index.
\end{theorem}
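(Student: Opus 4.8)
The strategy is to reduce the statement to Theorem~\ref{thm:pph} applied to the inclusion $\alg{N}^{\otimes n} \subset \alg{M}^{\otimes n}$. The left-hand side of the claimed identity is precisely the quantity appearing in Theorem~\ref{thm:pph} for that inclusion, provided $\mc{E}^{\otimes n}$ is the conditional expectation attached to it, so it suffices to establish: (i) $\alg{N}^{\otimes n} \subset \alg{M}^{\otimes n}$ is an irreducible subfactor and $\mc{E}^{\otimes n}$ is the (unique faithful, hence minimal) conditional expectation onto $\alg{N}^{\otimes n}$; and (ii) $[\alg{M}^{\otimes n} : \alg{N}^{\otimes n}] = [\alg{M}:\alg{N}]^n$. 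For (i): a tensor product of factors is a factor, so $\alg{M}^{\otimes n}$ and $\alg{N}^{\otimes n}$ are factors; by Tomita's commutation theorem $(\alg{N}^{\otimes n})' = (\alg{N}')^{\otimes n}$, and one checks that the relative commutant $(\alg{N}^{\otimes n})' \cap \alg{M}^{\otimes n}$ collapses to $(\alg{N}' \cap \alg{M})^{\otimes n} = \mathbb{C} I$; and $\mc{E}^{\otimes n}$ is a normal conditional expectation onto $\alg{N}^{\otimes n}$ which is faithful (a tensor product of faithful normal maps is faithful), so by uniqueness of the faithful conditional expectation for an irreducible subfactor it is the relevant one. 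For (ii): if $\{m_j\}$ is a Pimsner--Popa/Kosaki quasi-basis for $\mc{E}$, then $\{m_{j_1} \otimes \cdots \otimes m_{j_n}\}$ is a quasi-basis for $\mc{E}^{\otimes n}$, and the index element multiplies, giving $\operatorname{Ind}(\mc{E}^{\otimes n}) = \operatorname{Ind}(\mc{E})^n = [\alg{M}:\alg{N}]^n$; equivalently this is the multiplicativity of the minimal index under tensor products~\cite{MR1059320,MR1662525}. Granting (i) and (ii), Theorem~\ref{thm:pph} gives the left-hand side $= \log[\alg{M}^{\otimes n}:\alg{N}^{\otimes n}] = n\log[\alg{M}:\alg{N}]$. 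The statement for reducible subfactors is the same with ``unique faithful'' replaced by ``minimal'' throughout, using that the minimal conditional expectation is multiplicative under tensor products.

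It is instructive to spell out the two inequalities directly, which also clarifies the role of product ensembles. For the upper bound, exactly as in the proof of the Lemma preceding Theorem~\ref{thm:pph}, for a normal state $\varphi$ with $\varphi \circ \mc{E}^{\otimes n} = \varphi$ and a finite decomposition $\varphi = \sum_x p_x \varphi_x$ one has, by Theorem~5.15 of~\cite{MR1230389},
\[
	\chi(\{p_x\}, \{\varphi_x\}) - \chi(\{p_x\}, \{\varphi_x | \alg{N}^{\otimes n}\}) = \sum_x p_x S(\varphi_x, \varphi_x \circ \mc{E}^{\otimes n});
\]
since (ii), via the Pimsner--Popa inequality, gives $\mc{E}^{\otimes n}(z) \geq [\alg{M}:\alg{N}]^{-n} z$ for all $z \in (\alg{M}^{\otimes n})_+$ and hence $\varphi_x \circ \mc{E}^{\otimes n} \geq [\alg{M}:\alg{N}]^{-n}\varphi_x$, monotonicity and the scaling of the relative entropy bound each term by $n\log[\alg{M}:\alg{N}]$, and summing against $\{p_x\}$ yields ``$\leq$''.

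For the lower bound I would tensorise an optimal single-copy ensemble. Given $\varepsilon > 0$, Theorem~\ref{thm:pph} provides a normal state $\varphi$ on $\alg{M}$ with $\varphi \circ \mc{E} = \varphi$ and a finite decomposition $\varphi = \sum_x p_x \varphi_x$ with $S_{\alg{M}|\alg{N}}(\{p_x\}, \{\varphi_x\}) > \log[\alg{M}:\alg{N}] - \varepsilon$. Then $\varphi^{\otimes n}$ is a normal state on $\alg{M}^{\otimes n}$ satisfying $\varphi^{\otimes n} \circ \mc{E}^{\otimes n} = \varphi^{\otimes n}$, with the finite decomposition into the product states $\varphi_{x_1} \otimes \cdots \otimes \varphi_{x_n}$ with weights $p_{x_1} \cdots p_{x_n}$. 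Since $(\varphi_{x_1} \otimes \cdots \otimes \varphi_{x_n}) \circ \mc{E}^{\otimes n} = (\varphi_{x_1}\circ\mc{E}) \otimes \cdots \otimes (\varphi_{x_n}\circ\mc{E})$, the supports behave correspondingly, and the relative entropy is additive over tensor products, the reformulation above shows this ensemble realises the value $n \sum_x p_x S(\varphi_x, \varphi_x \circ \mc{E}) = n\, S_{\alg{M}|\alg{N}}(\{p_x\},\{\varphi_x\}) > n(\log[\alg{M}:\alg{N}] - \varepsilon)$. Letting $\varepsilon \to 0$ and combining with the upper bound completes the argument.

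The main obstacle is (ii) together with the accompanying operator inequality: one needs $\mc{E}^{\otimes n}(z) \geq [\alg{M}:\alg{N}]^{-n} z$ for \emph{every} positive $z$ in $\alg{M}^{\otimes n}$, not merely for simple tensors, which is why one has to go through the quasi-basis (or the completely positive form of the Pimsner--Popa inequality, so that $\mc{E} - \lambda\,\operatorname{Id}$ completely positive tensorises to a positive map); verifying in (i) that the relative commutant of the tensor inclusion is trivial is of the same technical flavour. Once these structural facts are in place, the remaining ingredients --- the reformulation via Theorem~5.15 of~\cite{MR1230389}, additivity of the relative entropy over tensor products, and tensorising the optimal ensemble --- are routine.
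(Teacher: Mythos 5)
Your proposal is correct and follows essentially the same route as the paper: reduce to Theorem~\ref{thm:pph} for the inclusion $\alg{N}^{\otimes n} \subset \alg{M}^{\otimes n}$, using that the minimal conditional expectation of the tensor product inclusion is $\mc{E}^{\otimes n}$ and that the index is multiplicative (the paper simply cites Corollary~5.6 of~\cite{MR1027496} for both facts, where you supply a quasi-basis argument and a direct verification of the two inequalities). The extra detail is sound but not a different method.
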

\begin{proof}
	First note that $\alg{N}^{\otimes n} \subset \alg{M}^{\otimes n}$ is again a subfactor. Hence we can apply Theorem~\ref{thm:pph} with $\mc{E}^{(n)} : \alg{M}^{\otimes n} \to \alg{N}^{\otimes n}$ the minimal conditional expectation for this subfactor. But it turns out that $\mc{E}^{(n)} = \mc{E}^{\otimes n}$, see the proof of Corollary 5.6 of~\cite{MR1027496}. The claim then follows since $[\alg{M}^{\otimes n}: \alg{N}^{\otimes n} ] = [ \alg{M} : \alg{N} ]^n$, by the same corollary.
\end{proof}
This shows that we do not gain any advantage from allowing the use of multiple copies of the system simultaneously, not even when we allow states which are entangled along these copies.
 
In general the private capacity is not additive, in the sense that the capacity of $\Phi \otimes \Phi$ is not twice the capacity of $\Phi$~\cite{PhysRevLett.103.120501}. Here on the other hand it appears that it \emph{is} additive, and it is interesting to see that we can use the special structure of subfactors to explicitly calculate the quantum privacy. It should be noted however that this does not quite prove the private classical capacity of the channel, since it would require an analog of the results in~\cite{MR2105852,MR2234571} (in particular on how to encode information) that is valid in this operator algebraic setting. It nevertheless suggests that using subfactor theory can be very powerful to study certain quantum channels, which warrants further study.

\section{Examples: abelian quantum double models and CFT}\label{sec:examples}
Examples of subfactors arise naturally in physical systems. Here we recall how they appear in the study of superselection sectors in local quantum physics. A more detailed introduction to sector theory and local quantum physics can be found in~\cite{MR1405610}. The goal of the theory is to describe all properties of the charges (or superselection sectors). We will consider two examples here: rational conformal field theory on the circle~\cite{MR1838752}, and a class of topologically ordered quantum spin systems, known as \emph{Kitaev's quantum double}~\cite{MR1951039}. The connection between the latter and the Jones index, together with some applications, was discussed before in~\cite{secretjkl}. 

The interesting thing is that there is a direct physical interpretation of the index in these models. In particular, in both cases the different superselection sectors can be described by a \emph{braided tensor $C^*$-category}. This category encodes all physical properties of the superselection sectors (or charges, anyons), such as fusion and braiding rules. To each such anyon $i$ one can assign a \emph{quantum dimension} (or just dimension) $d_i$. The total quantum dimension, an invariant of the tensor category, is then defined via $\mathcal{D}^2 = \sum_i d_i^2$, where the sum is over a set of representatives of the anyons (or in the language of category theory, of irreducible objects). It plays an important role in topologically ordered systems, where it is believed to be related to an area law for the ground states of such systems~\cite{PhysRevLett.96.110404,PhysRevLett.96.110405}.

The starting point is a net of local algebras, representing the observables that can be measured in finite parts of the system of interest. More precisely, let $\Gamma$ be the space on which the model is defined (for example Minkowksi space for relativistic theories or a discrete lattice for spin systems). Then to suitable subsets $\Lambda \subset \Gamma$ we assign a $C^*$-algebra $\alg{A}(\Lambda)$, representing all observables in the region $\Lambda$. There are two natural properties that we require, the first being isotony: if $\Lambda_1 \subset \Lambda_2$, there should be a unital inclusion $\alg{A}(\Lambda_1) \subset \alg{A}(\Lambda_2)$. The second is locality: if $\Lambda_1 \perp \Lambda_2$ then $\alg{A}(\Lambda_1)$ and $\alg{A}(\Lambda_2)$ commute. Here the meaning of $\Lambda_1 \perp \Lambda_2$ depends on the context: for our purposes it is enough for this to mean that $\Lambda_1$ and $\Lambda_2$ are disjoint. In relativistic theories we require that they are spacelike separated. We also introduce the notation $\Lambda^c := \Gamma \setminus \Lambda$ (again, in relativistic theories one would take the spacelike complement). The assignment $\Lambda \mapsto \alg{A}(\Lambda)$ defines a local net of $C^*$-algebras. The \emph{quasi-local algebra} $\alg{A}$ is the inductive limit of this net (in the category of $C^*$-algebras). It can be interpreted as the set of all observables that can be approximated arbitrarily well by strictly local observables. We also require that $\alg{A}$ is represented on some Hilbert space $\mc{H}$ using a representation $\pi_0$ (typically the vacuum representation).

The type of charges that we want to describe are localized, in the sense that outside of the localization region the system looks like the vacuum or ground state. Hence we need a set $\mathcal{C}$ of localization regions. The elements of $\mathcal{C}$ are subsets of the space $\Gamma$. The choice of admissible regions depends on the type of charges that we want to describe. For example, using Gauss' law we can always detect an electrical charge from arbitrarily far away by measuring the flux through a sphere. Hence such charges cannot be localized in a compact region. Once the set $\mathcal{C}$ is fixed, we consider irreducible representations $\pi$ that look like $\pi_0$ outside the localization region. More precisely, we demand that
\begin{equation}
	\label{eq:sselect}
	\pi | \alg{A}(C^c) \cong \pi_0 | \alg{A}(C^c)
\end{equation}
for \emph{all} $C \in \mathcal{C}$. Here $\cong$ means unitary equivalence of representations. Note that they only have to be unitary equivalent when restricted to observables \emph{outside} of $C$. Nevertheless, this is quite a strong condition, since it has to hold for all regions $C$. Physically this means that we are able to move charges from one region $C_1 \in \mc{C}$ to another region $C_2$. A \emph{sector} is then an equivalence class of irreducible representations satisfying equation~\eqref{eq:sselect}.

The set of such representations has a surprisingly rich structure, which can be analyzed with an additional technical assumption: Haag duality. This says that for each $C \in \mc{C}$ we have $\pi_0(\alg{A}(C))'' = \pi_0(\alg{A}(C^c))'$. One inclusion follows from locality, but the other is non-trivial. The main utility of Haag duality is that it allows one to pass from representations to endomorphisms of $\alg{A}$. That is, each representation $\pi$ as above is equivalent to $\pi_0 \circ \rho$, where $\rho$ is an endomorphism of $\alg{A}$. Moreover, it follows that $\rho(A) = A$ for all $A \in \alg{A}(C^c)$ for some $C \in \mc{C}$. That is, it only acts non-trivially in the localization region $C$. These endomorphisms can be endowed with the structure of a \emph{braided tensor category}, which encodes all physical properties of the charges. For example, it is possible to define a braiding, which tells us what happens if we interchange to charges. For bosons and fermions, interchanging two particles \emph{twice} is always trivial, but in lower dimensional systems there are other possibilities. This happens in fact in both of the examples discussed below.

The final piece of structure that we need is a set of \emph{double localization regions} $\mc{C}_2$, which consists of pairs $C_A, C_B$ of regions in $\mc{C}$, with the condition that $C_A$ and $C_B$ are sufficiently far apart (the precise notion of which is model dependent). We will also write $AB \in \mc{C}_2$. With such a choice we can define the von Neumann algebra $\mc{R}_{AB} := \pi_0(\alg{A}(C_A \cup C_B))''$, that is, the von Neumann algebra generated by $\pi_0(\alg{A}(C_A))''$ and $\pi_0(\alg{A}(C_B))''$. Our final assumption is that the \emph{split property} holds. This says that $\mc{R}_{AB} \simeq \pi_0(\alg{A}(C_A))'' \overline{\otimes} \pi_0(\alg{A}(C_B))''$. Here $\simeq$ means that the two algebras are naturally isomorphic, in the sense that the map $A \otimes B \mapsto AB$ extends to an isomorphism of von Neumann algebras.\footnote{This is not automatic if the algebras are not Type I.} Finally, we can define an algebra $\widehat{\mc{R}}_{AB} := \pi_0(\alg{A}( (C_A \cup C_B)^c))'$. Note that by locality we have $\mc{R}_{AB} \subset \widehat{\mc{R}}_{AB}$. 

It will become clear in a moment why this is an interesting inclusion of algebras to study, but first we give two examples of physical theories that satisfy all the conditions.

\begin{example}[Rational conformal field theory~\cite{MR1838752}]\label{ex:rcft}
	Consider $\Gamma = S^1$, the circle. To each interval we associate a local algebra $\alg{A}(I)$ in such a way that if $I \subset J$, then $\alg{A}(I) \subset \alg{A}(J)$ and all algebras are represented on the same Hilbert space $\mc{H}$. In addition we assume that the net is local, in the sense that $\alg{A}(I)$ and $\alg{A}(J)$ commute if $\overline{I} \cap \overline{J} = \emptyset$ and that the net is irreducible, in the sense that the algebra generated by all $\alg{A}(I)$ is equal to $\alg{B}(\mc{H})$. The representation $\pi_0$ is the identity representation and will be omitted. We also assume that the net is conformal, in the sense that it is covariant with respect to a positive energy representation of the M{\"o}bius group and there is a cyclic vacuum vector. In that case, Haag duality holds for intervals, and so does the split property under a mild additional assumption. Hence all prerequisites to analyze the sectors of the theory are fulfilled.

	The localization regions are the open intervals on the circle. Let $I_A$ and $I_B$ be two intervals with disjoint closure. Then we get a subfactor $\mc{R}_{AB} = \alg{A}(I_A \cup I_B)'' \subset \widehat{\mc{R}}_{AB} = \alg{A}( (I_A \cup I_B)^c)'$. If this subfactor has finite index, we call the conformal field theory \emph{rational}. One can show that the index does not depend on the choice of intervals.
\end{example}

\begin{example}[Kitaev's quantum double model for abelian groups in the thermodynamic limit~\cite{toricendo}]\label{ex:topo}
	Consider the space $\Gamma$ to be the edges of a square $\mathbb{Z}^2$ lattice and let $G$ be a finite abelian group. At each edge we have a $|G|$-dimensional system. The local algebras are given by $\alg{A}(\Lambda) := \bigotimes_{x \in \Lambda} M_{|G|}(\mathbb{C})$ with $\Lambda \subset \Gamma$ a finite subset. This gives a local net as before.

	Dynamics on the system can be introduced by defining local Hamiltonians. For the quantum double model these were first introduced in~\cite{MR1951039}. Once these are defined it is possible to talk about ground states. The model has many ground states, but only one of them is translation invariant~\cite{toricgs}, which we will denote by $\omega_0$. The corresponding GNS representation will be denoted by $\pi_0$. This will be taken as the reference representation. As localization regions $\mc{C}$ we consider \emph{cones}. They can be obtained by taking a point in the plane and draw two straight lines from this point going to infinity. Then we identify $C \subset \Gamma$ with all edges that are contained in the region bounded by the two lines (corresponding to the smallest angle between them), or intersect the two lines. See Figure~\ref{fig:leftcone} for an example of two cones. For the double localization regions $\mathcal{C}_2$ we consider all pairs of disjoint cones such that their distance is at least two. With this choice of localization regions Haag duality and the split property hold~\cite{phdleander,haagdouble,haagdtoric}.
\end{example}

We come back to the inclusions $\mc{R}_{AB} \subset \widehat{\mc{R}}_{AB}$. As mentioned before, a representation $\pi$ satisfying the criterion~\eqref{eq:sselect} can be represented by a localized endomorphism. That is, we can find an endomorphism $\rho_1$ of $\alg{A}$ such that $\rho_1(A) = A$ for all $A \in \alg{A}(C_A^c)$ and $\pi_0 \circ \rho_1 \cong \pi$. But using again~\eqref{eq:sselect}, the same is true for some $\rho_2$ localized in $C_B$, and there is a unitary $V$ such that $V \pi_0 \circ\rho_1(A) = \pi_0 \circ \rho_2(A) V$. Such a $V$ is called a \emph{charge transporter}, since it moves a charge from $C_A$ to $C_B$. Moreover, using locality it follows that $V \in \widehat{\mc{R}}_{AB}$. However, unless $\rho_1$ is trivial, $V$ is \emph{not} in $\mc{R}_{AB}$. Hence it is reasonable to conjecture that $\widehat{\mc{R}}_{AB}$ is bigger than $\mc{R}_{AB}$ precisely because of the charge transporters, and that we can learn something about the charges if we can see how much ``bigger'' it is.

This is the motivation behind the following definition.
\begin{definition}
	Consider a local theory as described above, together with a suitable choice of double localization regions $\mathcal{C}_2$ such that $\mc{R}_{AB} \subset \widehat{\mc{R}}_{AB}$ is a subfactor for every choice of two localization regions $AB \in \mc{C}_2$. The we define $\mu_{\pi_0} := \inf_{AB \in \mc{C}_2} [ \widehat{\mc{R}}_{AB} : \mc{R}_{AB} ]$.
\end{definition}
Typically the index does not depend on the choice of the two localization regions, and one can forget about the infimum.

With this notation we can make the connection between the inclusion $\mc{R}_{AB} \subset \widehat{\mc{R}}_{AB}$ and the sectors of the theory.  
\begin{theorem}[\cite{MR1838752,klindex}]\label{thm:cindex}
	If the category of superselection sectors comes from a rational conformal net as in Example~\ref{ex:rcft} we have $\mc{D}^2 = \mu_{\pi_0}$. In the case of Example~\ref{ex:topo}, we have $\mathcal{D}^2 \leq \mu_{\pi_0}$. Here $\mathcal{D}^2$ is the total quantum dimension of the tensor category of the sectors of the theory, as before.
\end{theorem}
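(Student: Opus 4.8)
In both cases the plan is to describe the extension $\mc{R}_{AB} \subset \widehat{\mc{R}}_{AB}$ in terms of charge transporters and to read off the index from the resulting decomposition; an equality (rather than just a bound) is obtained only in the conformal case, where the extra input of modularity is available.

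\emph{The conformal case (Example~\ref{ex:rcft}).} By locality $\widehat{\mc{R}}_{AB} = \alg{A}((I_A\cup I_B)^c)'$ is an extension of $\mc{R}_{AB} = \alg{A}(I_A\cup I_B)''$, and two-interval Haag duality generically fails, the ``defect'' being exactly what $\mu_{\pi_0}$ measures. The first step is to identify this extension. For each irreducible sector $[\rho]$, applying~\eqref{eq:sselect} twice produces representatives $\rho_A$, $\rho_B$ localized in $I_A$, $I_B$ together with a unitary charge transporter $V_\rho$ intertwining $\pi_0\rho_A$ and $\pi_0\rho_B$; locality forces $V_\rho \in \widehat{\mc{R}}_{AB}$, and $V_\rho \notin \mc{R}_{AB}$ unless $\rho$ is trivial. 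The second step is to show that $\mc{R}_{AB}$ together with the family $\{V_\rho\}$ generates all of $\widehat{\mc{R}}_{AB}$, so that $\mc{R}_{AB} \subset \widehat{\mc{R}}_{AB}$ is isomorphic to the Longo--Rehren subfactor of the braided category $\mathrm{Rep}(\alg{A})$: using the split property to regard $\rho_A$ and $\overline{\rho}_B$ as acting on the two tensor factors of $\mc{R}_{AB}$, its dual canonical endomorphism is $\theta = \bigoplus_{[\rho]} \rho_A\,\overline{\rho}_B$, the sum running over the finitely many irreducible sectors (finiteness coming from $\mu_{\pi_0} < \infty$). The third step is then a dimension count: since the statistical dimension is additive over direct sums,
\[
	[\widehat{\mc{R}}_{AB} : \mc{R}_{AB}] = d(\theta) = \sum_{[\rho]} d(\rho_A)\,d(\overline{\rho}_B) = \sum_{[\rho]} d(\rho)^2 = \mc{D}^2 .
\]
Independence of the two chosen intervals follows from M\"obius covariance, which acts transitively enough on configurations of two disjoint intervals, so the infimum in the definition of $\mu_{\pi_0}$ is redundant.

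\emph{The abelian quantum double (Example~\ref{ex:topo}).} Here every anyon of $D(G)$ has quantum dimension one, so $\mc{D}^2 = |G|^2$, and the same mechanism yields only the lower bound. Using the ribbon (string) operators of Kitaev's model one constructs, for each anyon $[\rho]$, cone-localized endomorphisms $\rho_A$, $\rho_B$ in $C_A$, $C_B$ and a charge transporter $V_\rho \in \widehat{\mc{R}}_{AB}$ as above. Because the fusion rules are those of the abelian group $\widehat{G}\times G$ and charge transporters for inequivalent anyons cannot be intertwined inside $\mc{R}_{AB}$, these contribute ``orthogonally'': the dual canonical endomorphism $\theta$ of $\mc{R}_{AB}\subset\widehat{\mc{R}}_{AB}$ \emph{contains} $\bigoplus_{[\rho]}\rho_A\,\overline{\rho}_B$ as a subobject, so that
\[
	[\widehat{\mc{R}}_{AB}:\mc{R}_{AB}] = d(\theta) \geq \sum_{[\rho]} d(\rho)^2 = \mc{D}^2 \qquad\text{for every } AB \in \mc{C}_2 ,
\]
and therefore $\mu_{\pi_0} = \inf_{AB}[\widehat{\mc{R}}_{AB}:\mc{R}_{AB}] \geq \mc{D}^2$. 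Upgrading this to an equality would require, conversely, showing that nothing in $\widehat{\mc{R}}_{AB}$ lies outside the algebra generated by $\mc{R}_{AB}$ and these charge transporters.

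\emph{Main obstacle.} The genuinely hard step --- and the one responsible for the difference between the two cases --- is the matching upper bound in the conformal setting: that $\mc{R}_{AB}$ and the charge transporters generate \emph{all} of $\widehat{\mc{R}}_{AB}$, i.e.\ that the two-interval subfactor is exactly the Longo--Rehren subfactor. This rests on the modularity (non-degeneracy of the braiding) of $\mathrm{Rep}(\alg{A})$ for a completely rational net, which is the main theorem of~\cite{MR1838752} and whose proof --- via $\alpha$-induction and Rehren's analysis of braided subfactors --- is itself substantial. In the spin-system setting of Example~\ref{ex:topo} the corresponding upper bound --- an argument that $\mc{R}_{AB}$ together with the charge transporters already exhausts $\widehat{\mc{R}}_{AB}$, hence a complete-rationality/modularity statement in that framework --- is not established at this level of generality, which is exactly why only the inequality $\mc{D}^2 \leq \mu_{\pi_0}$ is obtained there.
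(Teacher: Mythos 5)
The paper offers no proof of this theorem: it is imported verbatim from the cited references \cite{MR1838752,klindex}, so there is nothing internal to compare against. Your sketch correctly reconstructs the standard arguments of those references: in the conformal case the identification of the two-interval inclusion with the Longo--Rehren subfactor of the sector category, whose index is $d(\theta)=\sum_{[\rho]}d(\rho)^2=\mc{D}^2$; in the cone setting the intermediate-subfactor (equivalently, subobject-of-$\theta$) argument giving only the lower bound $[\widehat{\mc{R}}_{AB}:\mc{R}_{AB}]\geq\mc{D}^2$, because Haag duality for cones and the split property suffice to exhibit the charge transporters inside $\widehat{\mc{R}}_{AB}$ but not to show they exhaust it. One caveat on attribution in the conformal case: in Kawahigashi--Longo--M\"uger the logical order is the reverse of what you suggest --- modularity of the braiding is a \emph{consequence} of identifying the two-interval subfactor with the Longo--Rehren subfactor (derived from complete rationality: split, strong additivity, finite $\mu$-index), not an input to it; taking modularity as the hypothesis for the generation step would be circular relative to that proof. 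This does not affect the correctness of your sketch, since you defer the hard step to the reference, but the dependency should be stated the other way around.
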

Even though the statement for the cone-localized charges in quantum spin systems is somewhat weaker, in practice it turns out that the bound is actually saturated and we have an equality. For example one can consider the class of models defined by Kitaev~\cite{MR1951039} for a finite abelian group $G$. This can be considered in the setting of Example~\ref{ex:topo}, where $\pi_0$ is the GNS representation of the translation invariant ground state (which is unique). The index can be calculated explicitly in these models, and is equal to $|G|^2$, which is equal to $\mathcal{D}^2$ for these models~\cite{phdleander,klindex}. 

Let us consider the case $G = \mathbb{Z}_2$ in more detail~\cite{toricendo}. Recall that it is defined on the edges of $\mathbb{Z}^2$. An important role in the theory is played by \emph{path} or \emph{string} operators. To a finite path $\xi$ of edges we can associate an operator $F_\xi$, by acting with $\sigma_z$ on each edge. Similarly, we can take a path on the dual lattice $\widehat{\xi}$ and identify it with all the edges it crosses. To such a dual path we assign the path operator $F_{\widehat{\xi}}$ by acting with $\sigma_x$ on the edges of the dual path. Note that both operators are self-adjoint and square to the identity (since the Pauli matrices do). These operators play a fundamental role because they create excitations: if $\Omega$ is the ground state vector, $F_\xi \Omega$ is a state with two excitations, which can be thought of as being located at the end of the string. A similar statement is true for $F_{\widehat{\xi}}$. To get examples of representations satisfying~\eqref{eq:sselect}, we can choose a semi-infinite string $\xi$. If $\xi(n)$ is the finite path consisting of the first $n$ edges, one can show that $\rho(A) = \lim_{n \to \infty} F_{\xi(n)} A F_{\xi(n)}$ defined an automorphism of $\alg{A}$, and $\pi_0 \circ \rho$ satisfies equation~\eqref{eq:sselect}. The charge transporters can be constructed explicitly as well.

With notation as before, it can be shown that $\widehat{\mc{R}}_{AB}$ is generated by $\mc{R}_{AB}$ and two charge transporters $V_X$ and $V_Z$, which both square to the identity and can be chosen to commute. We set $V_0 := I$ and $V_Y := V_X V_Z$, so that we get a unitary representation $g \mapsto V_g$ of $\mathbb{Z}_2 \times \mathbb{Z}_2$ in a natural way. With this notation, it can be shown that every element $X \in \widehat{\mc{R}}_{AB}$ can be uniquely written as $X = \sum_i A_i V_i$ with $A_i \in \mc{R}_{AB}$. The conditional expectation $\mc{E} : \widehat{\mc{R}}_{AB} \to \mc{R}_{AB}$ is then given by $\mc{E}\left(\sum_i A_i V_i\right) = A_0$. As mentioned, the index is equal to $|G|^2 = 4$~\cite{klindex}.

\section{Finite dimensional approximation}\label{sec:finapprox}
Consider again the setting of the toric code in the thermodynamic limit. We will continue to use the notation of Section~\ref{sec:examples}. Recall that we are interested in relative entropies $S(\omega, \varphi)$, where $\omega$ and $\varphi$ are both normal states of $\mc{R}_{AB}$ or $\widehat{\mc{R}}_{AB}$. Unfortunately, in this setting the definition of the relative entropy is rather technical, making it difficult to do concrete calculations. In addition, for the toric code and similar systems on finite lattices we have a much better understanding of, for example, the entanglement entropy in ground states. Hence it is desirable to be able to at least find an approximation of $S(\omega, \varphi)$ in terms of finite dimensional systems.

\begin{figure}
	\begin{center}
	\begin{tikzpicture}
		\clip (-5.1,-1) rectangle (6,4.9);
		\draw[draw=black] (-5,6) -- (0,-0.5) -- (-1.5,6) -- (-3,6);
		\draw[draw=black] (3,6) -- (1.5,-1) -- (6.5,6) -- (3,6);
		\draw (-3.45,3.5) node {$C_1$};
		\draw (4.4,2.5) node {$C_2$};

		\begin{scope}
			\clip (-5,6) -- (0,-0.5) -- (-1.5,6) -- (-3,6);
			\draw[fill=lightgray,thin] (0,-0.5) circle [radius=4];
		\end{scope}
		\begin{scope}
			\clip (3,6) -- (1.5,-1) -- (6.5,6) -- (3,6);
			\draw[fill=lightgray,thin] (1.5,-1) circle [radius=4];
		\end{scope}
	
		\draw[very thick] (-0.95,1.15) .. controls (-2.5,3) and (-0.5,5) .. (-4.5,6);
		\draw[very thick] (2.1,1) .. controls (4.5,3.2) and (3.5,4.75) .. (5.5,6);
		\draw (-0.80,1.95) node {$\xi_{X}^1(n)$};
		\draw (2.40,2) node {$\xi_X^2(n)$};
	\end{tikzpicture}
	\end{center}
	\caption{Two cones $C_1$ and $C_2$. The shaded part is an example of all sites in $\mc{R}_n$ for some $n$. Also shown are semi-infinite strings used in the definition of $V_X$. The paths $\xi_X^1(n)$ and $\xi_X^2(n)$ are the finite parts of these paths obtained by taking the intersection with the shaded parts.}
\label{fig:leftcone}
\end{figure}
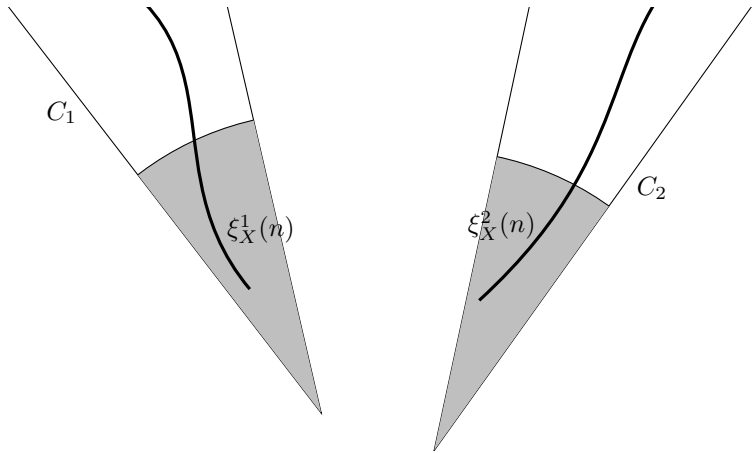

We show how this can be done in our standing example of the toric code. Consider two cones $C_k$ as before, and write $x_k$ for the tips of the cones. Define $\Lambda_n = \bigcup_{k=1,2} C_k \cap B_{n}(x_k)$. That is, the set of all spins in one of the cones with distance at most $i$ from the tip. We define the algebra $\mathcal{R}_n := \pi_0(\alg{A}(\Lambda_n))$.

We also choose two semi-infinite paths to infinity, one in each cone. We write $\xi_X^{i}(n)$ for the intersection with $\Lambda_n$ (see also Figure~\ref{fig:leftcone}). Then we can define the charge transporter $V_X$ as the weak-operator limit of the sequence of path operators $F_{\xi_X^1(n)} F_{\xi_X^2(n)} F_{\xi_n'}$, where $\xi_n'$ is a path connecting the far ends of $\xi_X^1(n)$ and $\xi_X^2(n)$ using the shortest path completely outside of the shaded region $\Lambda_n$ (see~\cite{toricendo} for a proof of convergence and that this indeed is a charge transporter). Similarly, we choose two semi-infinite dual paths $\xi_Z^i$, one in each cone. We choose them in such a way that the shortest (dual) path between any two points (with one on each path) never crosses $\xi_X^1$ or $\xi_X^2$. For example, in Figure~\ref{fig:leftcone} we could choose a path to the right of $\xi_X^1$, and one to the left of $\xi_X^2$. We can then define $V_Z$ analogously to the definition of $V_X$. There are some properties of this construction that we will need later: with the choices we made it follows that $V_X V_Z = V_Z V_X$. We also have that $V_X^2 = V_Z^2 = I$. Finally, note that $F_{\xi_X^1(n)} F_{\xi_X^2(n)} V_X$ commutes with \emph{all} operators in $\mc{R}_n$, since the two string operators effectively cancel the action of $V_X$ in the shaded region. A similar statement is true for $V_Z$.

With this notation we then define $\widehat{\mathcal{R}}_n := \pi_0(\alg{A}(\Lambda_n)) \vee \{V_X, V_Z\}$. Together with $\mc{R}_n$ these two algebras will be used to approximate the index for $\mc{R}_{AB} \subset \widehat{\mc{R}}_{AB}$, using the following lemma.
\begin{lemma}\label{lem:vnapprox}
	The sequence $\mathcal{R}_i$ is an increasing net of finite dimensional von Neumann algebras such that $\left( \bigcup_i \mathcal{R}_i \right)'' = \mathcal{R}_{AB}$, and similarly the weak$^*$ closure of the union of $\widehat{\mc{R}}_i$ is equal to $\widehat{\mc{R}}_{AB}$. Moreover, $\mathcal{E}$ restricts to a conditional expectation $\mathcal{E}_i : \widehat{\mc{R}}_i \to \mc{R}_i$.
\end{lemma}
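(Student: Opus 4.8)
The plan is to verify the four claims in turn; apart from the structural facts about $\widehat{\mc{R}}_{AB}$ and the charge transporters recalled in Section~\ref{sec:examples} (from~\cite{toricendo,klindex}), the argument is soft. First, each $\Lambda_n$ is a finite set of edges, so $\alg{A}(\Lambda_n) = \bigotimes_{x \in \Lambda_n} M_{|G|}(\mathbb{C})$ is finite dimensional, hence so is its image $\mc{R}_n = \pi_0(\alg{A}(\Lambda_n))$; and a finite dimensional $*$-subalgebra of $\alg{B}(\mc{H})$ is automatically weak$^*$ closed, so $\mc{R}_n$ is a von Neumann algebra. Since $B_n(x_k) \subseteq B_{n+1}(x_k)$ we have $\Lambda_n \subseteq \Lambda_{n+1}$, so isotony gives $\mc{R}_n \subseteq \mc{R}_{n+1}$, and hence also $\widehat{\mc{R}}_n = (\mc{R}_n \cup \{V_X,V_Z\})'' \subseteq \widehat{\mc{R}}_{n+1}$. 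For the weak closures: every finite subset of $C_A \cup C_B$ sits inside some $\Lambda_n$ (take $n$ past the distances of its points to the two tips), so $\bigcup_n \alg{A}(\Lambda_n)$ is norm dense in $\alg{A}(C_A \cup C_B)$, whence $\left(\bigcup_n \mc{R}_n\right)'' = \pi_0(\alg{A}(C_A \cup C_B))'' = \mc{R}_{AB}$. For the hatted algebras, $\widehat{\mc{R}}_n \subseteq (\mc{R}_{AB} \cup \{V_X,V_Z\})'' = \widehat{\mc{R}}_{AB}$ using the recalled generation of $\widehat{\mc{R}}_{AB}$, while the weak$^*$ closure of $\bigcup_n \widehat{\mc{R}}_n$ contains $\bigcup_n \mc{R}_n$ and both transporters, hence contains $(\mc{R}_{AB} \cup \{V_X,V_Z\})'' = \widehat{\mc{R}}_{AB}$ (adjoining operators to a weak$^*$ dense $*$-algebra generates the same von Neumann algebra as adjoining them to its closure); so equality holds.

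For the conditional expectation I would first identify $\widehat{\mc{R}}_n$ explicitly as $\sum_{i \in \{0,X,Y,Z\}} \mc{R}_n V_i$. The point is that $V_X$ and $V_Z$ \emph{normalize} $\mc{R}_n$: with $F := F_{\xi_X^1(n)} F_{\xi_X^2(n)}$, the two factors are supported in $\Lambda_n$ so $F \in \mc{R}_n$, and they have disjoint supports and square to $I$, so $F^2 = I$; by the observation recalled in Section~\ref{sec:examples}, $FV_X$ commutes with $\mc{R}_n$, in particular with $F$, which forces $F$ to commute with $V_X$ and $(FV_X)^2 = I$, and hence $V_X A V_X = FAF \in \mc{R}_n$ for $A \in \mc{R}_n$; likewise for $V_Z$. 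Using $V_i^* = V_i$ and $V_i V_j = V_{ij}$, this makes $\sum_i \mc{R}_n V_i$ a $*$-algebra, which is finite dimensional --- hence weak$^*$ closed --- and contains $\mc{R}_n$, $V_X$, $V_Z$, so it coincides with $\widehat{\mc{R}}_n$. Since $\widehat{\mc{R}}_n \subseteq \widehat{\mc{R}}_{AB}$ and the decomposition $X = \sum_i A_i V_i$ is unique in $\widehat{\mc{R}}_{AB}$, an element of $\widehat{\mc{R}}_n$ has all its coefficients $A_i$ in $\mc{R}_n$, so $\mc{E}(X) = A_0 \in \mc{R}_n$; thus $\mc{E}(\widehat{\mc{R}}_n) \subseteq \mc{R}_n$. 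The restriction $\mc{E}_n := \mc{E}|_{\widehat{\mc{R}}_n}$ is then positive, unital, equal to the identity on $\mc{R}_n$, and $\mc{R}_n$-bimodular (as $\mc{R}_n \subseteq \mc{R}_{AB}$ and $\mc{E}$ is $\mc{R}_{AB}$-bimodular); that is, a conditional expectation onto $\mc{R}_n$.

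I expect the normalization identity $V_X \mc{R}_n V_X = \mc{R}_n$ (and its analogue for $V_Z$) to be the only real obstacle: everything else is bookkeeping once $\widehat{\mc{R}}_{AB}$ and the $V$'s are in hand. It is precisely there that the geometry of the model enters --- one needs that the truncated strings $\xi_X^i(n)$ lie inside $\Lambda_n$ and that the completing path $\xi'_n$ runs outside $\Lambda_n$, so that $FV_X$ acts trivially on $\mc{R}_n$ --- and I would lean on the analysis of these path operators in~\cite{toricendo} for the precise statements, as well as for the compatibility of the dual paths defining $V_Z$ with those defining $V_X$ (needed so that the four $V_i$ genuinely realize $\mathbb{Z}_2 \times \mathbb{Z}_2$).
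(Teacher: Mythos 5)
Your proof is correct and follows essentially the same route as the paper's: finite dimensionality and density of the $\mc{R}_n$, the generation $\widehat{\mc{R}}_{AB} = \mc{R}_{AB} \vee \{V_X, V_Z\}$, and the commutation of $F_{\xi_X^1(n)}F_{\xi_X^2(n)}V_X$ (and its $Z$-analogue) with $\mc{R}_n$. You supply more detail than the paper on the final claim --- the paper dismisses the restriction of $\mc{E}$ as clear from its definition, whereas you derive $\mc{E}(\widehat{\mc{R}}_n) \subseteq \mc{R}_n$ from the normalization $V_k \mc{R}_n V_k = \mc{R}_n$ and the uniqueness of the decomposition $X = \sum_i A_i V_i$ --- which is a sound and welcome elaboration.
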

\begin{proof}
	Note that $\mc{R}_i$ is by definition the tensor product of finitely many spin-$\frac{1}{2}$ algebras, and hence finite dimensional. By definition $\bigcup_i \mc{R}_i$ is dense in $\pi(\alg{A}(C_1 \cup C_2))$ in the norm topology. This algebra in turn is by definition weak$^*$-dense in $\mc{R}_{AB}$.

	Recall that $\widehat{\mc{R}}_{AB}$ is equal to $\mc{R}_{AB} \vee \{V_X, V_Z\}$ (by~\cite[Lemma 4.1]{klindex}), from which the claim that $\bigcup \widehat{\mc{R}}_i$ is weak$^*$ dense in $\widehat{\mc{R}}_{AB}$ follows. In addition, since $V_X$ and $V_Z$ square to the identity and mutually commute, the algebra that they generate is finite dimensional. Also note that if $X$ is a product of path operators, we have $X V_X = \pm V_X X$, and similarly for $V_Z$ (see the proof of~\cite[Lemma 4.4]{klindex}). It follows that $\widehat{\mc{R}}_i$ is finite dimensional.

	The claim on the conditional expectations is clear from the definition of $\mc{E}$.
\end{proof}

The key point is that Lemma~\ref{lem:vnapprox} gives us a way, through the relation of the index to an entropic quantity, to obtain an approximation in terms of finite dimensional systems. More precisely, under these conditions we have that the sequence $S(\omega|\mc{R}_i, \varphi|\mc{R}_i)$ converges to $S(\omega, \varphi)$ by Corollary 5.12(iv) of~\cite{MR1230389}.

The advantage of using a finite dimensional approximation is that it is possible to use density matrix techniques instead of spatial derivatives. In general this makes the problem much more amenable to direct calculation. To this end it will be useful to have a more explicit description of $\widehat{\mc{R}}_i$. Since it is a finite dimensional $C^*$-algebra, it is isomorphic to an algebra of the form $\bigoplus_k M_{n_k}(\mathbb{C})$ for some integers $n_k$. The next lemma gives this decomposition. As we will see later, this decomposition is related to the superselection sectors of the model.

\begin{lemma}\label{lem:findecomp}
	For each $i$ there is an integer $n_i$ such that $\mathcal{R}_i \cong M_{n_i}(\mathbb{C})$ and $\widehat{\mathcal{R}}_i \cong \bigoplus_{j=1}^4 M_{n_i}(\mathbb{C})$. In this representation the algebra $\mathcal{R}_i$ is embedded into $\widehat{\mathcal{R}}_i$ via $A \mapsto \diag(A,A,A,A)$. Together with $\widehat{V}_X = \diag(I, -I, I, -I)$ and $\widehat{V}_Z = \diag(I, I, -I, -I)$ they generate the whole algebra.
\end{lemma}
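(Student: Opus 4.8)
The plan is to realise $\widehat{\mc{R}}_i$ explicitly as $\mc{R}_i$ tensored with the group algebra of $\mathbb{Z}_2 \times \mathbb{Z}_2$, the extra copy of $\mathbb{Z}_2 \times \mathbb{Z}_2$ being generated by two \emph{central} self-adjoint unitaries obtained by ``dressing'' $V_X$ and $V_Z$ with finite string operators. Concretely, put $\widehat{V}_X := F_{\xi_X^1(n)}F_{\xi_X^2(n)} V_X$ and let $\widehat{V}_Z$ be the analogous product of $V_Z$ with the two finite dual string operators. By the remarks just before Lemma~\ref{lem:vnapprox}, both $\widehat{V}_X$ and $\widehat{V}_Z$ commute with every element of $\mc{R}_i$; each is self-adjoint and squares to $I$; and since the strings are chosen so that the various string operators entering them commute (see~\cite{toricendo}), also $\widehat{V}_X \widehat{V}_Z = \widehat{V}_Z \widehat{V}_X$. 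As $F_{\xi_X^1(n)}F_{\xi_X^2(n)} \in \mc{R}_i$ squares to $I$, we have $V_X = F_{\xi_X^1(n)}F_{\xi_X^2(n)} \widehat{V}_X$, and likewise for $V_Z$, so $\widehat{\mc{R}}_i = \mc{R}_i \vee \{V_X, V_Z\} = \mc{R}_i \vee \{\widehat{V}_X, \widehat{V}_Z\}$; in particular $\widehat{V}_X, \widehat{V}_Z$ lie in the centre of $\widehat{\mc{R}}_i$.

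Granting this, the main body is finite-dimensional bookkeeping. The algebra $\mc{R}_i = \pi_0(\alg{A}(\Lambda_n))$ is a tensor product of finitely many copies of $M_2(\mathbb{C})$, hence isomorphic to a full matrix algebra $M_{n_i}(\mathbb{C})$. Label $\mathbb{Z}_2 \times \mathbb{Z}_2$ multiplicatively by $\{1, X, Z, XZ\}$ and set $\widehat{V}_1 := I$, $\widehat{V}_{XZ} := \widehat{V}_X\widehat{V}_Z$; then the relations above show that every element of $\widehat{\mc{R}}_i$ is of the form $\sum_g A_g \widehat{V}_g$ with $A_g \in \mc{R}_i$. \emph{Assuming} this expression is unique, i.e.\ that $\sum_g A_g \widehat{V}_g = 0$ forces every $A_g = 0$, the linear map $\mc{R}_i \otimes \mathbb{C}[\mathbb{Z}_2 \times \mathbb{Z}_2] \to \widehat{\mc{R}}_i$ sending $A \otimes g$ to $A\widehat{V}_g$ is a unital $*$-isomorphism: it is multiplicative and $*$-preserving because the two tensor factors commute and each $\widehat{V}_g$ is a self-adjoint unitary with $\widehat{V}_g \widehat{V}_h = \widehat{V}_{gh}$, it is surjective by the previous sentence, and injective by the assumed uniqueness. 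Since $\mathbb{C}[\mathbb{Z}_2 \times \mathbb{Z}_2] \cong \mathbb{C}^4$, this yields $\widehat{\mc{R}}_i \cong M_{n_i}(\mathbb{C}) \otimes \mathbb{C}^4 \cong \bigoplus_{j=1}^4 M_{n_i}(\mathbb{C})$, in which $\mc{R}_i = \mc{R}_i \otimes \mathbb{C}I$ sits as the diagonal $A \mapsto \diag(A,A,A,A)$, while $\widehat{V}_X$ and $\widehat{V}_Z$ become the sign patterns obtained by evaluating the four characters of $\mathbb{Z}_2 \times \mathbb{Z}_2$ on the generators $X$ and $Z$; after labelling the four summands appropriately these are exactly $\diag(I,-I,I,-I)$ and $\diag(I,I,-I,-I)$, and together with $\mc{R}_i$ they generate.

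The crux, and the step I expect to be the main obstacle, is precisely the uniqueness assumed above: a priori the central unitaries $\widehat{V}_X, \widehat{V}_Z$ might satisfy an extra relation, collapsing some of the four summands — this is the subtle, model-dependent point, and morally it is where the $\mathbb{Z}_2 \times \mathbb{Z}_2$ superselection structure of the toric code enters. I would reduce it to the fact recalled in Section~\ref{sec:examples} that every element of $\widehat{\mc{R}}_{AB}$ is \emph{uniquely} of the form $\sum_i A_i V_i$ with $A_i \in \mc{R}_{AB}$ and $i \in \{1, X, Z, XZ\}$; equivalently, $\sum_i A_i V_i = 0$ with $A_i \in \mc{R}_{AB}$ forces all $A_i = 0$. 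Writing $\widehat{V}_g = u_g V_g$ with $u_g \in \mc{R}_i \subset \mc{R}_{AB}$ a unitary built from the relevant finite string operators, a relation $\sum_g A_g \widehat{V}_g = 0$ becomes $\sum_g (A_g u_g) V_g = 0$, whence $A_g u_g = 0$ and so $A_g = 0$. (Alternatively, from $\mc{E}(V_i) = \delta_{i,1} I$ — the defining formula for $\mc{E}$ in Section~\ref{sec:examples} — together with the $\mc{R}_i$-bimodule property one gets $\mc{E}_i(\widehat{V}_g) = \delta_{g,1} I$, and one peels off the coefficients of a dependence relation one at a time by multiplying by $\widehat{V}_h$ before applying $\mc{E}_i$.) Thus the only genuinely model-specific ingredient is the linear independence of the charge transporters established in~\cite{klindex, toricendo}; everything else is routine.
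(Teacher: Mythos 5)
Your argument is correct and essentially the same as the paper's: both hinge on the dressed charge transporters $F_XV_X$ and $F_ZV_Z$ being commuting self-adjoint unitaries that are central in $\widehat{\mc{R}}_i$, and your group-algebra/character description of $\widehat{\mc{R}}_i \cong \mc{R}_i \otimes \mathbb{C}[\mathbb{Z}_2\times\mathbb{Z}_2]$ is just another packaging of the paper's four central projections $\tfrac{1}{4}(I\pm F_XV_X)(I\pm F_ZV_Z)$. The only place you go beyond the paper is in spelling out why the map is injective (the paper merely asserts this), and your reduction to the unique decomposition $X=\sum_i A_iV_i$ in $\widehat{\mc{R}}_{AB}$ is exactly the right justification.
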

\begin{proof}
	The algebra $\mc{R}_i$ is a tensor product of finitely many matrix algebras $M_2(\mathbb{C})$, and hence is isomorphic to $M_{n_i}(\mathbb{C})$ for some $n_i$. As for $\widehat{\mc{R}}_i$, first write $F_X := F_{\xi_X^1(i)} F_{\xi_X^2(i)}$, and similarly for $F_Z$. Note that $F_X V_X$ and $F_Z V_Z$, together with $\mc{R}_i$, still generate $\widehat{\mc{R}}_i$. But using the discussion before Lemma~\ref{lem:vnapprox}, $\frac{1}{2}(I \pm F_X V_X)$ and $\frac{1}{2}(I \pm F_Z V_Z)$ are easily checked to be projections with mutually orthogonal ranges in the center of $\widehat{\mathcal{R}}_i$. Hence we have four central projections that we can use to decompose $\widehat{\mc{R}}_i$ into four blocks.

This can be done explicitly: setting $\pi(A) = \diag(A,A,A,A)$ and $\pi(F_X V_X) = \widehat{V}_X$ for $k = X,Z$ we obtain an injective map from $\widehat{\mc{R}}_i$ into $\bigoplus_{i=1}^4 M_n(\mathbb{C})$. It is straightforward to check that it preserves the algebraic relations of $\mc{R}_i$. We also note that $\frac{1}{4}\pi( (I+F_X V_X)(I+F_Z V_Z) ) = \diag(I,0,0,0)$, that is, the projection onto the first block. With similar expressions we obtain the projections onto the other blocks, and we see that $\pi$ is also surjective. This completes the proof.
\end{proof}

\begin{remark}
	Note that the lemma in particular shows that $\widehat{\mc{R}}_i$ is \emph{not} a factor. The reason is that we can conjugate the charge transporters $V_k$ with a unitary in $\mc{R}_i$, such that this conjugated operator commutes with all operators in $\widehat{\mc{R}}_i$, hence the center is non-trivial. This is no longer true for the infinite algebra $\widehat{\mc{R}}_{AB}$. Hence $\mc{R}_i \subset \widehat{\mc{R}}_i$ is not a subfactor, and even though we have a conditional expectation between them, we therefore cannot apply the theory of subfactors directly. In addition, for finite Type I factors the theory behaves a bit differently (see for example~\cite[Thm 3.8]{MR1662525}). 
\end{remark}

We now come to the definition of a state $\omega = \omega \circ \mc{E}$ that will be used to approximate the index. First define projections $P_{\pm} := \frac{1}{2}(I \pm V_X)$ and $Q_{\pm} := \frac{1}{2}(I \pm V_Z)$. Note that they all mutually commute. Moreover, $P_+ P_- = Q_+ Q_- = 0$.

It will be useful to calculate expectation values $\omega_0(P_j Q_k A P_j Q_k)$ with $A \in \mathcal{R}_i$. To this end, first note that $\omega_0(V_j A V_k) = 0$ if $j \neq k$. Heuristically this can be understood in terms of superselection sectors: the operators $V_j$ create a pair of excitations, one in each of the cones $C_i$. Hence they can interpolate between the different sectors in each cone. The local operators $A$, on the other hand, can only create pairs of excitations in each cone individually, i.e.\ after acting with $A$ you stay in the same sector. It follows that $\langle V_j^* \Omega, A V_k \Omega \rangle = 0$, where $\Omega$ is the GNS vector of $\omega_0$. A rigorous argument can be distilled from the results in~\cite{toricgs}. With this observation, we find
\begin{equation}
	\label{eq:projected}
	\begin{split}
	\omega_0(P_i Q_j A &V_k P_i Q_j) = \\
			&\frac{c(i,j,k)}{16}\left( \omega_0(A) + \omega_0(V_X A V_X) + \omega_0(V_Y A V_Y) + \omega_0(V_Z A V_Z) \right), 
	\end{split}
\end{equation}
for $A \in \alg{A}(C_1 \cup C_2)$. Here $c(i,j,k)$ is given by
\[
	c(i,j,0) = 1, \quad c(i,j,X) = i, \quad c(i,j,Y) = i \cdot j, \quad c(i,j,Z) = j,
\]
where $i,j \in \{+,-\}$  and we identify $\pm$ with $\pm 1$.

We can now define a state $\omega$ on $\widehat{\mc{R}}_{AB}$ that will be used to estimate the relative entropy of $\mc{R}_{AB}$ and $\widehat{\mc{R}}_{AB}$:
\begin{equation}
	\label{eq:omega}
	\omega(X) := \sum_{j,k \in \{+, -\}} \omega_0(P_j Q_k X P_j Q_k).
\end{equation}
From the previous calculation it follows that $\omega(I) = 1$, and hence it is a state. Moreover, by collecting terms one sees that $\omega(A V_k) = 0$ if $k \neq 0$, and hence $\omega \circ \mathcal{E} = \omega$. Importantly, this is \emph{not} true for the individual states in the decomposition~\eqref{eq:omega}. 

\begin{lemma}
	The state $\omega$ defined in equation is a normal state on $\widehat{\mc{R}}_{AB}$ such that $\omega \circ \mc{E} = \omega$. It can be written as an equal weight superposition of four \emph{distinct} states, but whose restrictions to $\mc{R}_{AB}$ are equal. The same is true when we restrict $\omega$ to $\widehat{\mc{R}}_i$.
\end{lemma}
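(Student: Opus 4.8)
The plan is to unpack the definition of $\omega$ in equation~\eqref{eq:omega} as a manifest convex combination and verify the four required properties in turn. First I would write $\omega = \frac{1}{4}\sum_{j,k \in \{+,-\}} \omega_{j,k}$, where $\omega_{j,k}(X) := 4\,\omega_0(P_j Q_k X P_j Q_k)$. Each $\omega_{j,k}$ is positive since $X \mapsto \omega_0(P_j Q_k X P_j Q_k)$ is a composition of the normal state $\omega_0$ with the completely positive (hence positive) map $X \mapsto P_j Q_k X P_j Q_k$, and it is normal because $\omega_0$ is normal and conjugation by a fixed bounded operator is weak$^*$-continuous. Normalization $\omega_{j,k}(I) = 4\,\omega_0(P_j Q_k) = 1$ follows from equation~\eqref{eq:projected} with $A = I$, $k=0$ (giving $c(i,j,0)=1$ and the bracket equal to $\omega_0(I) + \omega_0(V_X^2) + \omega_0(V_Y^2) + \omega_0(V_Z^2) = 4$ since each $V_g$ squares to $I$); alternatively directly from $\omega(I)=1$ as already noted in the text, together with positivity of each summand. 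This gives the decomposition into four states of equal weight.

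Next I would show the four states $\omega_{j,k}$ restrict to the \emph{same} state on $\mc{R}_{AB}$. For $A \in \mc{R}_{AB}$ (approximated by $\pi_0$ of local observables), equation~\eqref{eq:projected} with $k=0$ shows $\omega_{j,k}(A) = \tfrac14\big(\omega_0(A) + \omega_0(V_X A V_X) + \omega_0(V_Y A V_Y) + \omega_0(V_Z A V_Z)\big)$, which is visibly independent of $(j,k)$. (Since equation~\eqref{eq:projected} is stated for $A \in \alg{A}(C_1 \cup C_2)$, one extends to all of $\mc{R}_{AB}$ by normality.) To see the four states are genuinely \emph{distinct}, it suffices to evaluate them on the charge transporters: from the definition~\eqref{eq:omega} and the structure $\omega(A V_k)$, a direct computation along the lines of~\eqref{eq:projected} gives $\omega_{j,k}(V_X) = j$ (i.e. $\pm 1$) and $\omega_{j,k}(V_Z) = k$, so different $(j,k)$ yield different functionals. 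The invariance $\omega \circ \mc{E} = \omega$ is already established in the text (collecting terms shows $\omega(A V_k) = 0$ for $k \neq 0$, so $\omega = \omega \circ \mc{E}$); I would simply cite that.

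Finally, for the statement about $\widehat{\mc{R}}_i$, I would observe that everything above used only: (i) that $V_X, V_Z$ (and hence $P_\pm, Q_\pm$) lie in the algebra, (ii) the vanishing relation $\omega_0(V_j A V_k)=0$ for $j\neq k$ with $A$ local, and (iii) the identity~\eqref{eq:projected}. All of these hold verbatim with $\mc{R}_{AB}$ replaced by $\mc{R}_i$ and $\widehat{\mc{R}}_{AB}$ by $\widehat{\mc{R}}_i$, since $V_X, V_Z \in \widehat{\mc{R}}_i$ by construction and the relevant $A$'s are local. Concretely, restricting $\omega$ to $\widehat{\mc{R}}_i$ gives $\omega|_{\widehat{\mc{R}}_i} = \frac14 \sum_{j,k} \omega_{j,k}|_{\widehat{\mc{R}}_i}$, the four restricted states agree on $\mc{R}_i$ by the same computation, are distinct because $V_X, V_Z \in \widehat{\mc{R}}_i$ separate them, and $\omega|_{\widehat{\mc{R}}_i} \circ \mc{E}_i = \omega|_{\widehat{\mc{R}}_i}$ because $\mc{E}$ restricts to $\mc{E}_i$ by Lemma~\ref{lem:vnapprox}. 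Using the explicit model of Lemma~\ref{lem:findecomp} one can even write $\omega_{j,k}$ as a state supported (after the identification $\widehat{\mc{R}}_i \cong \bigoplus_{l=1}^4 M_{n_i}(\mathbb{C})$) on a single block, which makes distinctness transparent.

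I expect the only real subtlety — the ``main obstacle'' — to be the bookkeeping in deriving $\omega_{j,k}(V_X) = j$, $\omega_{j,k}(V_Z) = k$ (and more generally the dependence on $c(i,j,k)$), i.e. making sure the signs in~\eqref{eq:projected} are tracked correctly so that the four states are provably distinct rather than merely plausibly so; everything else (positivity, normality, normalization, $\mc{E}$-invariance, transfer to $\widehat{\mc{R}}_i$) is routine given the results already in hand.
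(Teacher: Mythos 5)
Your proposal is correct and follows essentially the same route as the paper's own (very terse) proof: normality from $\omega_0$ being normal, the decomposition $\omega=\tfrac14\sum_{j,k}\omega_{j,k}$ with $\omega_{j,k}(X)=4\,\omega_0(P_jQ_kXP_jQ_k)$, and equation~\eqref{eq:projected} used both to show the restrictions to $\mc{R}_{AB}$ coincide and that the $V_k$ separate the four states (your explicit values $\omega_{j,k}(V_X)=j$, $\omega_{j,k}(V_Z)=k$ just make the paper's ``the observables $V_k$ can be used to distinguish them'' concrete). The sign bookkeeping you flag as the main obstacle checks out against the coefficients $c(i,j,k)$.
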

\begin{proof}
	Since $\omega_0$ is implemented as a vector state for $\widehat{\mc{R}}_{AB}$ (the GNS vector for $\pi_0$), normality is clear. Using equation~\eqref{eq:projected} we see that the states are distinct, and that the observables $V_k$ can be used to distinguish them. However, the same equation shows that when restricted to $\mc{R}_{AB}$, all states have the same expectation values. The last claim follows directly from the construction.
\end{proof}

By Lemma~\ref{lem:findecomp}, the algebra $\widehat{\mc{R}}_i$ is a direct sum of four copies of $M_n(\mathbb{C})$. To calculate the relative entropies that we need, it will be useful to explicitly find the density operators in this representation. First define $\omega_{jk}(X) := 4 \omega_0(P_j Q_k X Q_k P_j)$ with $j,k = \pm$. The factor of four ensures that $\omega_{jk}(I) = 1$, so that it is in fact a state. The corresponding density matrices can then be written as:
\begin{lemma}
\label{lem:densityop}
The density matrix for $\omega_{++}$ is given by
\[
	\rho_{++} = \frac{1}{16} \begin{bmatrix}
		F_{++} \rho_0 F_{++} & & & \\
		& F_{-+} \rho_0 F_{-+} & &  \\
		& & F_{+-} \rho_0 F_{+-} &   \\
		& & & F_{--} \rho_0 F_{--} \\
	\end{bmatrix}.
\]
Here $\rho_0$ is the density matrix of $\omega_0$ restricted to $\mc{R}_i$ and $F_{\pm\pm} = (I \pm F_X)(I \pm F_Z)$ (the first $\pm$ in the subscript corresponds to the first $\pm$ in the product, and similarly for the second). The same is true for the other combinations of $P_\pm$ and $Q_\pm$ by flipping the appropriate signs.
\end{lemma}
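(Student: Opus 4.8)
The plan is to carry out the whole computation inside the explicit isomorphism $\pi : \widehat{\mc{R}}_i \xrightarrow{\sim} \bigoplus_{l=1}^{4} M_{n}(\mathbb{C})$ of Lemma~\ref{lem:findecomp} (write $n = n_i$), under which $\pi(A) = \diag(A,A,A,A)$ for $A \in \mc{R}_i$, $\pi(F_X V_X) = \diag(I,-I,I,-I)$ and $\pi(F_Z V_Z) = \diag(I,I,-I,-I)$, and to read off $\rho_{++}$ one block at a time. The first step is to transport the charge transporters themselves: since $F_X, F_Z \in \mc{R}_i$ are self-adjoint, square to $I$, and commute (and satisfy the mixed commutation relations with $V_X, V_Z$ recorded when these operators are constructed, just before Lemma~\ref{lem:vnapprox}), the identity $V_X = F_X \cdot (F_X V_X)$ yields $\pi(V_X) = \diag(F_X,-F_X,F_X,-F_X)$, and likewise $\pi(V_Z) = \diag(F_Z,F_Z,-F_Z,-F_Z)$ and $\pi(V_Y) = \diag(F_X F_Z,-F_X F_Z,-F_X F_Z,F_X F_Z)$.

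Next I would identify the density matrix of $\omega_0$ restricted to $\widehat{\mc{R}}_i$. Writing a general $X \in \widehat{\mc{R}}_i$ as $X = \sum_{k \in \{0,X,Y,Z\}} A_k V_k$ with $A_k \in \mc{R}_i$ and using the superselection orthogonality $\omega_0(A V_k) = 0$ for $k \neq 0$ (the relation $\langle V_0^{*}\Omega, A V_k\Omega\rangle = 0$ already used in deriving~\eqref{eq:projected}), one gets $\omega_0(X) = \omega_0(A_0) = \Tr(\rho_0 A_0)$. Combining this with the formulas for $\pi(V_k)$ above — whose four diagonal blocks carry signs that sum to zero for every $k \neq 0$ — a one-line check shows that, with respect to the unnormalized trace $\bigoplus_l \Tr_{M_n}$ on $\widehat{\mc{R}}_i$, the state $\omega_0|\widehat{\mc{R}}_i$ has density matrix $\sigma = \tfrac14\diag(\rho_0,\rho_0,\rho_0,\rho_0)$.

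Finally, from $P_{+} = \tfrac12(I+V_X)$ and $Q_{+} = \tfrac12(I+V_Z)$ one reads off $\pi(P_{+}Q_{+}) = \tfrac14 D$ with $D := \diag(F_{++},F_{-+},F_{+-},F_{--})$ and $F_{\pm\pm} := (I\pm F_X)(I\pm F_Z)$; since $V_X$ and $V_Z$ commute, $P_{+}Q_{+}$ is a projection and $Q_{+}P_{+} = P_{+}Q_{+}$. Hence $\omega_{++}(X) = 4\,\omega_0(P_{+}Q_{+}XQ_{+}P_{+}) = 4\,\Tr(\sigma \cdot \tfrac{1}{16} D\,\pi(X)\,D) = \tfrac14\Tr(D\sigma D\,\pi(X))$ by cyclicity of the trace, so $\rho_{++} = \tfrac14 D\sigma D = \tfrac{1}{16}\diag(F_{++}\rho_0 F_{++},\, F_{-+}\rho_0 F_{-+},\, F_{+-}\rho_0 F_{+-},\, F_{--}\rho_0 F_{--})$, which is the asserted expression. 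The statements for $\omega_{+-}, \omega_{-+}, \omega_{--}$ follow verbatim after replacing $P_{+}$ by $P_{-} = \tfrac12(I-V_X)$ and/or $Q_{+}$ by $Q_{-} = \tfrac12(I-V_Z)$; this only flips the signs inside the factors $I\pm F_X$ and $I\pm F_Z$, permuting and sign-flipping the four blocks.

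Most of this is elementary block-matrix algebra; the main thing to watch is the bookkeeping of which sign of $F_X$ (respectively $F_Z$) sits in which summand — i.e. keeping the correspondence between the central projections $\tfrac12(I\pm F_X V_X)$, $\tfrac12(I\pm F_Z V_Z)$ of Lemma~\ref{lem:findecomp} and the ordering of the four blocks consistent throughout. The only genuinely model-dependent input is the orthogonality $\omega_0(A V_k) = 0$ for $k \neq 0$, already justified earlier, together with the geometric fact that the chosen string operators can be arranged so that $F_X$ and $F_Z$ commute.
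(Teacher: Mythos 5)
Your proposal is correct and follows essentially the same route as the paper's proof: identify $\tfrac14(\rho_0\oplus\rho_0\oplus\rho_0\oplus\rho_0)$ as the density matrix of $\omega_0|\widehat{\mc{R}}_i$ via the orthogonality $\omega_0(AV_k)=0$ for $k\neq 0$, then write $V_X=F_X\widehat{V}_X$, $V_Z=F_Z\widehat{V}_Z$ to express $P_\pm,Q_\pm$ blockwise and conjugate. You merely spell out the block bookkeeping more explicitly than the paper does.
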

\begin{proof}
	Let $\rho_0$ be as in the statement. Then $\rho := \frac{1}{4} (\rho_0 \oplus \rho_0 \oplus \rho_0 \oplus \rho_0)$ is the density matrix of $\omega_0$ restricted to $\widehat{\mc{R}}_i$ in the representation of Lemma~\ref{lem:findecomp}: it is easy to check that $\Tr(\rho \diag(A,A,A,A)) = \omega_0(A)$ for all $A \in \mc{R}_i$, while $\Tr(\rho \diag(A,A,A,A) \widehat{V}_k) = 0$ for $k \neq 0$. Hence $4 P_+ Q_+ \rho P_+ Q_+$ is the density operator for $\omega_{++}$. By writing again $V_X = F_X \widehat{V}_X$ as in the proof of Lemma~\ref{lem:findecomp} we can write $P_+$ and $Q_+$ in the representation of Lemma~\ref{lem:findecomp}, from which we obtain $\rho_{++}$.

	Alternatively the correctness can be be verified directly using equation~\eqref{eq:projected}. The other choices of $\pm$ are shown in the same way.
\end{proof}

We are now in a position to approximate $[\widehat{\mc{R}}_{AB} : \mc{R}_{AB}]$ via a limiting procedure. The result below coincides with the value of the index $[\widehat{\mc{R}}_{AB} : \mc{R}_{AB}]$ which has been obtained before using different methods~\cite{klindex}. The main interest is in providing a specific state that realizes the equality, as well as in the proof method of taking the limit of finite dimensional systems. This concrete procedure provides an insight into the physical mechanisms behind the equality.
\begin{theorem}
Let $\mc{E} : \widehat{\mc{R}}_{AB} \to \mc{R}_{AB}$ be the conditional expectation given above. Then
\[
	\sup_{\varphi : \varphi \circ \mc{E} = \varphi} \sup_{ \{p_x\}, \{\varphi_x\}} S_{\widehat{\mc{R}}_{AB}|\mc{R}_{AB}}(\{p_x\}, \{\varphi_x\}) = 2 \log 2 = \log \mathcal{D}^2.
\]
The supremum is over all normal states leaving $\mc{E}$ invariant. It is attained on the state $\omega$ defined in equation~\eqref{eq:omega}, decomposed as an equal-weight convex combination of all four possibilities for $\omega_{\pm\pm}$.
\end{theorem}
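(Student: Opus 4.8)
The plan is to separate the numerical value, which is essentially a consequence of results already in the excerpt, from the attainment statement, which is the genuinely new part and where Lemma~\ref{lem:vnapprox} and Lemma~\ref{lem:densityop} do the work. Since $\mc{R}_{AB}\subset\widehat{\mc{R}}_{AB}$ is an irreducible subfactor with $[\widehat{\mc{R}}_{AB}:\mc{R}_{AB}]=|G|^2=4$ by~\cite{klindex}, Theorem~\ref{thm:pph} already gives that the left-hand side equals $\log[\widehat{\mc{R}}_{AB}:\mc{R}_{AB}]=\log 4=2\log 2$; and since the four anyons of the $\mathbb{Z}_2$ quantum double all have quantum dimension $1$, one has $\mathcal{D}^2=4$, so this common value is $\log\mathcal{D}^2$. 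Everything thus reduces to showing that the supremum is \emph{attained} on $\omega=\frac14\sum_{j,k}\omega_{jk}$, i.e.\ that $S_{\widehat{\mc{R}}_{AB}|\mc{R}_{AB}}(\{1/4\},\{\omega_{jk}\})=2\log 2$. By the identity $S_{\alg{M}|\alg{N}}(\{p_x\},\{\varphi_x\})=\sum_x p_x S(\varphi_x,\varphi_x\circ\mc{E})$ established in the proof of the Lemma of Section~\ref{sec:subfactors}, this is the same as $\sum_{j,k}\frac14 S(\omega_{jk},\omega_{jk}\circ\mc{E})=2\log 2$, and since the four states are interchanged by flipping signs it suffices to prove $S(\omega_{++},\omega_{++}\circ\mc{E})=2\log 2$.

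For this I would pass to the finite-dimensional approximations. By Lemma~\ref{lem:vnapprox} the $\mc{R}_i\subset\widehat{\mc{R}}_i$ form an increasing net of finite-dimensional algebras with weak$^*$-dense union equal to $\mc{R}_{AB}\subset\widehat{\mc{R}}_{AB}$, so Corollary~5.12(iv) of~\cite{MR1230389} gives $S(\omega_{++},\omega_{++}\circ\mc{E})=\lim_i S\big(\omega_{++}|\widehat{\mc{R}}_i,(\omega_{++}\circ\mc{E})|\widehat{\mc{R}}_i\big)$, and each term is now a relative entropy of density matrices on $\widehat{\mc{R}}_i\cong\bigoplus_{l=1}^4 M_{n_i}(\mathbb{C})$. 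Using the presentation of Lemma~\ref{lem:findecomp} together with $V_k=F_k\widehat{V}_k$, one checks that $\mc{E}$ restricted to $\widehat{\mc{R}}_i$ is the block-averaging map sending $(X_1,\dots,X_4)$ to $\frac14(X_1+\cdots+X_4)$ placed in all four blocks (solve $\mc{E}(\sum_g A_g V_g)=A_0$ in the matrix picture; the off-diagonal terms drop out because each of the sign patterns occurring sums to zero over the four blocks). By Lemma~\ref{lem:densityop} the density matrix of $\omega_{++}$ is $\rho_{++}=\diag(\tau_1,\dots,\tau_4)$ with $\tau_l=E^{(l)}\rho_0 E^{(l)}$ for the four mutually commuting projections $E^{(l)}=\frac14(I\pm F_X)(I\pm F_Z)$, and applying the block-averaging map produces the density matrix $\frac14\rho_0'$ in every block for $(\omega_{++}\circ\mc{E})|\widehat{\mc{R}}_i$, where $\rho_0':=\sum_l E^{(l)}\rho_0 E^{(l)}$. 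Substituting into $S(\rho,\sigma)=\Tr(\rho\log\rho-\rho\log\sigma)$ and using that $\rho_0'$ is block-diagonal with respect to the $E^{(l)}$ — so that $\Tr(\rho_0'\log\rho_0')=\sum_l\Tr(\tau_l\log\tau_l)=\Tr(\rho_{++}\log\rho_{++})$ — together with $\supp\tau_l\leq\supp\rho_0'$ (because $\rho_0'-\tau_l=\sum_{m\neq l}\tau_m\geq 0$), every term cancels except a leftover $\log 4$. Hence $S\big(\omega_{++}|\widehat{\mc{R}}_i,(\omega_{++}\circ\mc{E})|\widehat{\mc{R}}_i\big)=2\log 2$ for every $i$; the same holds for the other three $\omega_{jk}$, and averaging yields $S_{\widehat{\mc{R}}_{AB}|\mc{R}_{AB}}(\{1/4\},\{\omega_{jk}\})=2\log 2=\log\mathcal{D}^2$, so the supremum is attained.

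The main obstacle I expect is the bookkeeping in the finite-dimensional model: one must verify carefully that $\mc{E}|\widehat{\mc{R}}_i$ really is the block-averaging map in the representation of Lemma~\ref{lem:findecomp} (this uses $F_XF_Z=F_ZF_X$ and $F_X,F_Z\in\mc{R}_i$, so that $V_k=F_k\widehat V_k$ translates correctly into the matrix picture), and handle the support projections so that $\log(\frac14\rho_0')$ is legitimate and the cancellation is exact rather than merely asymptotic. It is also worth flagging, as the Remark after Lemma~\ref{lem:findecomp} does, that the finite pieces $\mc{R}_i\subset\widehat{\mc{R}}_i$ are \emph{not} subfactors, so no index theory may be invoked at finite level; the only finite-level input is the relative-entropy convergence of Corollary~5.12(iv) of~\cite{MR1230389}, which requires no factoriality, while the subfactor structure — and Theorem~\ref{thm:pph} — is used only for the limiting algebras.
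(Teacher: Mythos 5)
Your proposal is correct, and the attainment half is essentially the paper's own computation (the paper gets $S(\rho_x,\rho)=S(\rho_x,\tfrac14\rho_x)=2\log 2$ from the disjointness of the supports of the blocks $F_{\pm\pm}\rho_0 F_{\pm\pm}$ in Lemma~\ref{lem:densityop}, which is the same cancellation you perform after rewriting $S_{\widehat{\mc{R}}_{AB}|\mc{R}_{AB}}$ as $\sum_x p_x S(\varphi_x,\varphi_x\circ\mc{E})$). Where you genuinely diverge is the upper bound: you import the value $[\widehat{\mc{R}}_{AB}:\mc{R}_{AB}]=4$ from~\cite{klindex} and invoke Theorem~\ref{thm:pph}, which also tacitly requires knowing that the subfactor is irreducible so that $\mc{E}$ is the minimal conditional expectation. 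That is logically sound given those citations, but it runs against the stated purpose of the section, which is to \emph{re-derive} the index from the finite-dimensional approximation rather than assume it. The paper instead proves the upper bound directly at the level of $\widehat{\mc{R}}_i\cong\bigoplus_{j=1}^4 M_{n_i}(\mathbb{C})$: the invariance $\varphi\circ\mc{E}=\varphi$ forces the four diagonal blocks of $\rho_\varphi$ to be equal (by testing against $\widehat{V}_X,\widehat{V}_Y,\widehat{V}_Z$), and then for an arbitrary decomposition one finds $S(\varphi_x,\varphi)-S(\varphi_x|\mc{R}_i,\varphi|\mc{R}_i)=2\log 2+\sum_i S_0(\rho_x^i,\sum_k\rho_x^k)$ with each correction term $\leq 0$ by operator monotonicity of $\log$. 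This buys a self-contained computation which, combined with Theorem~\ref{thm:pph} read in the other direction, \emph{establishes} that the index is $4$ and that the inequality $\mc{D}^2\leq\mu_{\pi_0}$ of Theorem~\ref{thm:cindex} is saturated for the toric code; your route instead consumes that fact as input. If you want your argument to stand on its own for the section's purposes, you should replace the appeal to Theorem~\ref{thm:pph} by the finite-dimensional maximality argument above (your block-averaging description of $\mc{E}|\widehat{\mc{R}}_i$ already contains everything needed to characterize the invariant states).
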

\begin{proof}
	For the toric code there are precisely four abelian sectors, so $\mc{D}^2 = \sum_{i=1}^4 d_i^2 = 4$~\cite{toricendo,klindex}, from which the second equality follows. To show the first equality, we will calculate $S_{\widehat{\mc{R}}_{AB}|\mc{R}_{AB}}( \{p_x\}, \{ \varphi_x \} )$ for suitable decompositions. We do this by first restricting all states to $\widehat{\mc{R}}_i$ (and $\mc{R}_i$, respectively) and using the limiting procedure of~\cite[Corollary 5.12(iv)]{MR1230389}.

We first do this for $\varphi = \omega$ defined earlier, to show that the value $2 \log 2$ can be attained. Set $p_x = 1/4$ for $x = 1, \ldots, 4$ and choose each corresponding $\omega_i$ to be a distinct $\omega_{jk}$, $j,k = \pm$, as above. We have already observed that $\omega(A) = \omega_{jk}(A)$ for all $A \in \mc{R}_i$ and $j,k \in \{+, -\}$, hence $S(\omega_{x} | \mc{R}_i, \omega | \mc{R}_i) = 0$. To calculate $S(\omega_x, \omega)$, write $\rho_x$ and $\rho$ for the corresponding density matrices in the representation of Lemma~\ref{lem:findecomp}. Since $(I+F_X)(I-F_X) = (I+F_Z)(I-F_Z) = 0$, Lemma~\ref{lem:densityop} implies that the density matrices $\rho_x$ have mutually disjoint support projections. Because $\rho = \frac{1}{4} \sum \rho_x$ by definition, it follows that 
\[
	S(\rho_x, \rho) = S\left(\rho_x, \frac{1}{4} \rho_x\right) = - \log \left(\frac{1}{4}\right) + S(\rho_x, \rho_x) = 2 \log 2,
\]
independent of $x$. This gives a lower bound on the supremum.

To complete the proof we will show that $2 \log 2$ is the \emph{maximum} value that equation~\eqref{eq:chidiff} can attain, with the extra condition that $\varphi := \sum_x p_x \varphi_x$ satisfies $\varphi \circ \mc{E} = \varphi$. We first characterize such states. By Lemma~\ref{lem:findecomp} the density operator $\rho_\varphi$ for $\varphi$ can be written in the form $\rho_\varphi^1 \oplus \cdots \oplus \rho_\varphi^4$, with each component a positive operator in $\mc{R}_n$. Since $\varphi \circ \mc{E} = \varphi$, we must have that $\varphi(A V_i) = 0$ for $i \neq 0$ and $A \in \mc{R}_i$. But this implies that $\Tr(\rho_\varphi \diag(A,A,A,A) \widehat{V}_X) = 0$ for all $A \in \mc{R}_i$ and $\widehat{V}_X$ as in Lemma~\ref{lem:findecomp}. This can only be true if
\[
	\rho_\varphi^1 - \rho_\varphi^2 + \rho_\varphi^3 - \rho_\varphi^4 = 0.
\]
A similar argument for $\widehat{V}_Z$ and $\widehat{V}_Y := \widehat{V}_X \widehat{V}_Z$ gives a system of linear equations, which has as only solution $\rho_\varphi^i = \rho_\varphi^1$ for $i = 1,\ldots,4$.

Similarly as for the state $\varphi$, the density operators for the states $\varphi_x$ can be decomposed as $\rho_{\varphi_x} =: \rho_x^1 \oplus \cdots \oplus \rho_x^4$. With this notation we find
\[
	S(\varphi_x | \mc{R}_i, \varphi | \mc{R}_i) = \sum_{i=1}^4 \Tr_0 \left[ \rho_x^i \left( \log\left(\sum_i \rho^i_x\right) - \log\left( 4 \rho_\varphi^1\right)\right)\right],
\]
where $\Tr_0$ is the canonical trace on $\mc{R}_i$. Note that $\log (4 \rho_\varphi^1) = 2 \log 2 + \log(4 \rho_\varphi^1)$. Again using that the density operators for $\varphi_x$ and $\varphi$ are block-diagonal, this leads to
\begin{align*}
 S(\varphi_x, \varphi) &- S(\varphi_x | \mc{R}_i, \varphi | \mc{R}_i) \\
	&= \sum_{i=1}^4 \Tr_0 \left[ \rho_x^i \left( \log\left(\rho^i_x\right) - \log\left( \sum_i \rho_x^i\right) + 2 \log 2\right)\right] \\
	&= 2 \log 2 + \sum_i S_0\left(\rho_x^i, \sum_i \rho_x^i\right).
\end{align*}
To conclude the argument it is enough to show that the summands are always less than or equal to zero (note that $\rho^i_x$ is not a state since it is not normalized, so this is not a contradiction with the relative entropy of two states always being positive). Clearly $\rho_x^i \leq \sum_k \rho_x^k$, since each term is a positive operator. We may assume that $\rho_x^i > 0$ since otherwise $S_0(\rho_x^i, \sum_k \rho_x^k) = 0$. Because $\log$ is operator monotone it follows that $\log(\rho_x^i) \leq \log\left(\sum_k \rho_x^k\right)$, so that
\[
	S_0\left(\rho_x^i, \sum_k \rho_x^k\right) = \Tr_0 \left[ \rho_x^i \left(\log \left(\rho_x^i\right) - \log \left(\sum_k \rho_x^k\right) \right) \right] \leq 0
\]
Taking the sum over $p_x$ gives the desired result. The proof is complete by noting that Theorem~\ref{thm:cindex} gives an upper bound to $\log \mc{D}^2$, but from the remark at the beginning of the proof and the calculation here, the bound is an equality.
\end{proof}
\begin{remark}
	In the proof we directly calculated the relative entropies, since they have a clear physical meaning in quantum information theory. Since $\mc{R}_i \subset \widehat{\mc{R}}_i$ is an inclusion of $C^*$-algebras, and we have a faithful conditional expectation $\mc{E}_i$, it is interesting to compare this with Watatani's index theory for $C^*$-algebras~\cite{MR996807}. Note that the inclusion maps $M_n(\mathbb{C})$ onto four copies of itself. Moreover, the standard trace on $\widehat{\mc{R}}_i$ satisfies $\Tr(\mathcal{E}(A)B) = \Tr(AB)$ for all $A \in \widehat{\mathcal{R}}_i$ and $B \in \mathcal{R}_i$. It follows from the analysis in Section 2.4 of~\cite{MR996807} that $\operatorname{Index} \mathcal{E}_i = 4 (I \oplus I \oplus I \oplus I) \in \widehat{\mc{R}}_i$.
\end{remark}

For quantum double models for general finite abelian groups $G$ (the toric code corresponds to $G = \mathbb{Z}_2$) the structure is very similar~\cite{phdleander,haagdouble}, and with the appropriate modifications the same proof goes through with $\mc{D}^2 = |G|^2$, although carrying it out explicitly is much more involved. For non-abelian $G$ we expect the structure to be a bit different: there the irreducible representations of the quantum double $\mc{D}(G)$, the symmetry algebra behind the model, are no longer one dimensional. These irreducible representations are in correspondence with the different charges of the model. As a consequence, we expect that $\widehat{\mc{R}}_i$ no longer decomposes into blocks of equal size, but rather dependent on the dimension of the irreducible representations of $\mc{D}(G)$.

We conclude with a discussion of the properties that made the finite dimensional approximation work. First of all, Lemma~\ref{lem:findecomp} is a direct manifestation of superselection sectors, in the sense that local operators cannot interpolate between the different sectors. Such a structure is expected for all topologically ordered models. To do the explicit computation here, it was helpful that the sequence of approximating algebras is essentially the same for each $n$. This is due to the convenient choice of $V_X$ and $V_Z$, in such a way that we can easily cancel their action inside the shaded regions in Figure~\ref{fig:leftcone}. It seems reasonable to conjecture that as long as excitations are created by string-like operators, a similar structure is true. Finally, one needs to find a suitable state $\omega = \omega \circ \mc{E}$. In particular, it is useful if $\mc{E}$ easily restricts to the finite dimensional algebras. Indeed, this was necessary to characterize all invariant states in the theorem.

We also note that the structure compares closely to a construction due to Haah~\cite{MR3465431}. He characterizes operators that do not change the charge within an annulus, while here we consider the dual viewpoint: we take those operators that create a pair of excitations, one in each annulus. Indeed, the operators $V_X, V_Y$ and $V_Z$ play that role. In the finite setting we can replace these by suitable string operators. These are precisely characterized by the condition that they do not create any excitations outside of the two parts of the cone (the path outside the cone is not important, and any choice will generate the same algebra $\widehat{\mc{R}}_i$). This is reminiscent of Haah's condition that certain operators should commute with the projections in the Hamiltonians he considers. As a matter of fact, in the end Haah obtains a similar algebraic structure as we do (in particular, equation~(3) of~\cite{MR3465431}).

\bibliographystyle{amsplain}
\bibliography{refs/refs.bib}

\end{document}